\documentclass[conference]{IEEEtran}

\AtBeginDocument{%
  \providecommand\BibTeX{{%
    \normalfont B\kern-0.5em{\scshape i\kern-0.25em b}\kern-0.8em\TeX}}}

\usepackage{amsmath,amsthm,amsfonts}
\usepackage{algorithmic}
\usepackage[ruled, lined, vlined, linesnumbered]{algorithm2e}
\usepackage{array}
\usepackage{booktabs}
\usepackage{epstopdf}
\usepackage{rotating}
\usepackage{tabularx}
\usepackage{threeparttable}
\usepackage{multirow}
\usepackage{multicol}
\usepackage{nccmath}
\usepackage{subcaption}
\usepackage{stfloats}
\usepackage{xcolor}
\usepackage{xspace}



\hyphenation{op-tical net-works semi-conduc-tor}

\newtheorem{theorem}{Theorem}

\newtheorem{definition}{Definition}

\newtheorem{corollary}{Corollary}

\newcommand{\etal}{\emph{et~al.}\xspace}
\newcommand{\eg}{\emph{e.g.},\xspace}
\newcommand{\ie}{\emph{i.e.},\xspace}
\newcommand{\etc}{etc.\xspace}

\newcommand\figref[1]{\figurename~\ref{#1}}

\newcommand\tabref[1]{Table~\ref{#1}}

\newcommand\secref[1]{Sec.~\ref{#1}}
\newcommand\equref[1]{Eq.~(\ref{#1})}

\newcommand\algref[1]{Alg.~\ref{#1}}

\newcommand\thmref[1]{Theorem~\ref{#1}}
\newcommand\corref[1]{Corollary~\ref{#1}}

\newcommand{\fakeparagraph}[1]{\vspace{1mm}\noindent\textbf{#1.}}

\ifodd 0

\newcommand{\shuyue}[1]{{\color{blue}#1}}

\else

\newcommand{\shuyue}[1]{#1}

\fi

\ifCLASSOPTIONcompsoc
  \usepackage[nocompress]{cite}
\else
  \usepackage{cite}
\fi

\begin{document}

\title{Towards Capacity-Aware Broker Matching: \\ From Recommendation to Assignment}

\author{
    \IEEEauthorblockN{ 
        Shuyue Wei$^{1}$, Yongxin Tong$^{1}$, Zimu Zhou$^{2}$, Qiaoyang Liu$^{1}$, Lulu Zhang$^{3}$, Yuxiang Zeng$^{1}$, Jieping Ye$^{3}$
        }
        
    \IEEEauthorblockA{ 
        $^{1}$ State Key Laboratory of Software Development Environment, Beihang University, Beijing, China\\
        $^{2}$ City University of Hong Kong, Hong Kong, China \quad
        $^{3}$ Ke Holdings Inc., Beijing, China \\
        {$^{1}$ \{weishuyue, yxtong, qiaoyangliu, turf1013\}@buaa.edu.cn, 
        $^{2}$ zimuzhou@cityu.edu.hk} \\
        {$^{3}$ \{zhanglulu026, yejieping\}@ke.com}
    }
    
}

\markboth{Journal of \LaTeX\ Class Files,~Vol.~14, No.~8, August~2015}%
{Shell \MakeLowercase{\textit{et al.}}: Bare Demo of IEEEtran.cls for Computer Society Journals}

\IEEEdisplaynontitleabstractindextext

\maketitle

\begin{abstract}\label{sec:abstract}
Online real estate platforms are gaining increasing popularity, where a central issue is to match brokers with clients for potential housing transactions.
Mainstream platforms match brokers via top-\textit{k} recommendation.
Yet we observe through extensive data analysis that such top-\textit{k} recommendation tends to overload the top brokers, which notably degrades their service quality.
In this paper, we propose to avoid such overloading in broker matching via the paradigm shift from recommendation to assignment.
To this end, we design learned assignment with {contextual} bandits ({LACB}), a data-driven capacity-aware assignment scheme for broker matching which estimates broker-specific workload capacity in an online fashion and assigns brokers to clients from a global perspective to maximize the overall service quality.
Extensive evaluations on {synthetic and} real-world datasets from an industrial online real estate platform validate the {efficiency} and effectiveness of our solution.
\end{abstract}

\begin{IEEEkeywords}
    Capacity-Aware, Broker Matching, Real Estate Platform
\end{IEEEkeywords}

\section{Introduction} \label{sec:intro}

Online real estate platforms, such as Compass\footnote{https://www.compass.com/}, Zillow\footnote{https://www.zillow.com} and Ke Holdings Inc. (a.k.a Beike)\footnote{https://www.ke.com} are increasingly exploiting data-driven approaches to improve their business and service quality.
A central function of these platforms is to match clients who are interested in house purchases to appropriate brokers\footnote{We do not differ an agent from a broker, and collectively refer them as brokers.} for followup services \cite{Book_real_estate_match}.
The status quo of such broker matching is top-\textit{k} recommendation \cite{Top-K, KDD20_tutorial}.
Take Beike, the largest Chinese online real estate platform, as an example. 
When clients click for detailed information about a house on the platform app, three brokers associated with that house will be recommended by the App (see \figref{fig:rec_unfair}).
    
\begin{figure}[t]
    \centering
    \includegraphics[width=0.365\textwidth]{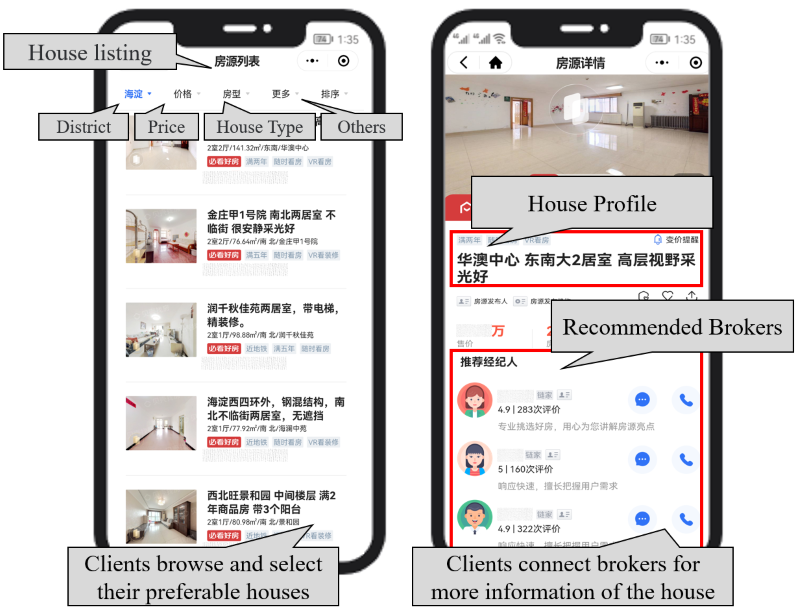}
    \caption{
    {\small 
    An illustration of top-\textit{k} broker recommendation on Beike, a major online real estate platform.
    When house buyers click for more information about their favorite houses, the platform App recommends three brokers by default.
    }
    } 
    \label{fig:rec_unfair}
    \vspace{-1.8em}
\end{figure}

We observe, through extensive data analysis on online real-estate platforms, that \textit{(i)} brokers have limited workload capacity and \textit{(ii)} the top-\textit{k} recommendation mechanism leads to the \textit{overloaded of top brokers}, also known as the \textit{overloaded phenomenon}, which impairs both the service quality and the long-term development of the platform.
Specifically, our study shows that due to the top-\textit{k} mechanism, the brokers' sign-up rate can drop from 14.3\%$\sim$27.5\% to 2.5\%$\sim$17.8\%, if they respond to over 40 client requests per day (see \secref{subsec:overloaded}).
Here the sign-up rate of a broker is a common indicator of service quality, which is the ratio between the number of clients who sign up with him/her and the total clients he/she served.
We also observe the Matthew effect \cite{Matthew} when adopting the top-\textit{k} recommendation mechanism.
That is, many requests are occupied by top brokers, leaving others few opportunities to improve their skills.
It may discourage those neglected brokers and harm the platform in the long run.

We argue that the overloaded phenomenon is caused by ignorance of brokers' workload capacity, which motivates us to take an \textit{assignment} \cite{ICDE20_task_assignment, SIGMOD15_task_assignment, ICDE16_online_task} perspective for capacity-aware broker matching.
That is, rather than blindly recommend the few top brokers to all clients, we propose to first estimate the workload capacity of individual brokers, and then assign them to clients in a global view without overwhelming the brokers.
However, it faces two practical challenges to implement capacity-aware assignment for broker matching.
\begin{itemize}
    \item 
    \textit{Challenge 1: how to estimate broker-specific workload capacity
    {in an online fashion}?}
    We observe that the workload capacity differs across brokers (see \secref{sec:motivation}), making personalized estimation necessary.
    However, it is impractical to collect data on a broker's service quality under all possible workloads 
    {in advance, which makes online estimation of workload capacity preferable}.
    Prior workload capacity estimation schemes \cite{WWW10_bandit_rec, ICML20_nn_bandit} fail to support 
    {such online learning of personalized estimation}.
    \item 
    \textit{Challenge 2: how to assign brokers under capacity limit to maximize the overall utility over time?}
    It is common that the amount of real estate transactions at present affects that in the near future.
    Consequently, broker assignments among batches tend to be correlated, making it difficult to assign brokers holistically.
    Most assignment schemes \cite{ICDE19_batch, KDD17_dispatch} consider clients and brokers in each batch independently, making them sub-optimal in terms of the collective utility of multiple batches.
\end{itemize}

To address these challenges, we propose \underline{L}earned \underline{A}ssignment with \underline{C}ontextual \underline{B}andits (LACB), a {data-driven} capacity-aware assignment scheme for real estate broker matching.
It tackles \textit{Challenge 1} via contextual bandits for data-efficient and {online} personalized capacity estimation.
LACB overcomes \textit{Challenge 2} with a capacity-aware value function, which accounts for both the short-term and the long-term utility of brokers for matching.
Our main contributions and results are summarized as follows.
\begin{itemize}
    \item 
    To the best of our knowledge, we are the first to identify the overload of top brokers problem for online real estate platforms.
    Extensive data analysis shows that brokers have limited workload capacity and their service quality tends to drop when overloaded, which motivates the shift from recommendation to assignment for broker matching.
    \item 
    We design LACB, a data-driven capacity-aware assignment scheme for broker matching.
    It estimates broker-specific capacity in an online manner and assigns brokers to clients from a global view.
    We further propose LACB-Opt, which accelerates assignments via broker selection.
    \item 
    We conduct extensive experiments on synthetic and real-world datasets from Beike, the largest Chinese online real estate platform.
    Experimental results validate the efficiency and effectiveness of our solutions.
\end{itemize}

In the rest of this paper, we first identify the overloaded phenomenon in \secref{sec:motivation} and formulate the problem in \secref{sec:problem}.
Then we present an overview of our solution in \secref{sec:method}, and introduce each module in  \secref{sec:method_estimate} and \secref{sec:method_assignment}, respectively.
We present the evaluations in \secref{sec:experiment}, review related work in \secref{sec:related} and finally conclude in \secref{sec:conclusion}.

\section{Motivations} \label{sec:motivation}
We motivate our study through measurements from Beike, an online real estate platform in China.
We observe a phenomenon called \textit{the overload of top brokers}, where a few brokers are tasked to serve amounts of requests that exceed their capacities, which eventually leads to drop in the brokers' service quality and the overall utility of the platform.

\subsection{Limited Broker Capacity} 
\label{subsec:limited}
Our first motivation is that brokers have limited capacity.
As with other service industries, we assume a real estate broker has limited capacity, \ie the number of services he/she can provide in unit time with high quality.
Since the low quality of service in housing transactions easily leads to the churn of clients, we hypothesize that the service quality of brokers tends to drop with the increase of served requests.
We test this hypothesis via the measurements below.

\fakeparagraph{Measurements}
We analyzed data from an online real estate platform from June 1st to August 31st, 2021 in two major cities in China to explore the relationship between service quality and capacity of brokers.
We use the broker's sign-up rate, \ie the ratio between the number of clients who sign up with the broker and that served in total, as the proxy for the service quality.
We measure the sign-up rates as the increase of workload, \ie the number of requests served daily, at both the city and individual levels.

\begin{figure}[t]
    \centering
    \includegraphics[width=0.35\textwidth]{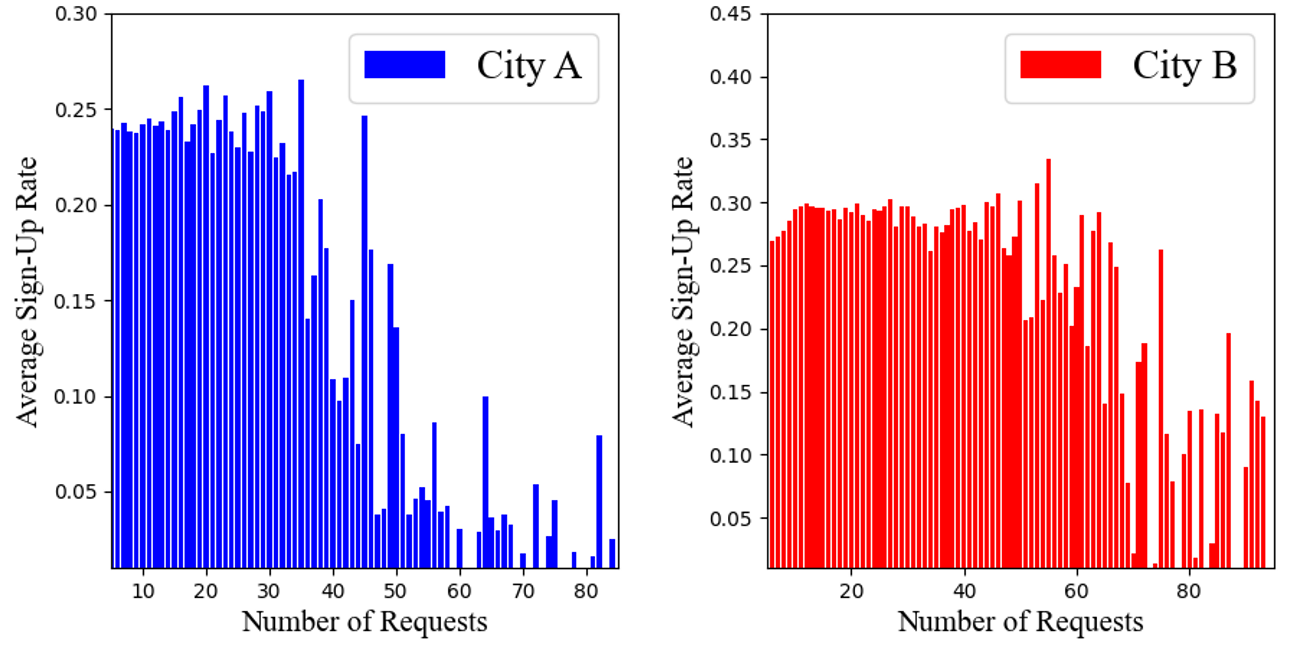}
    \caption{
    {\small The average sign-up rate of brokers in two cities. 
    }
    }
    \label{fig:overall_agents}
     \vspace{-1.8em}
\end{figure}

\begin{figure*}[t]
    \centering
    \includegraphics[width=0.9\linewidth]{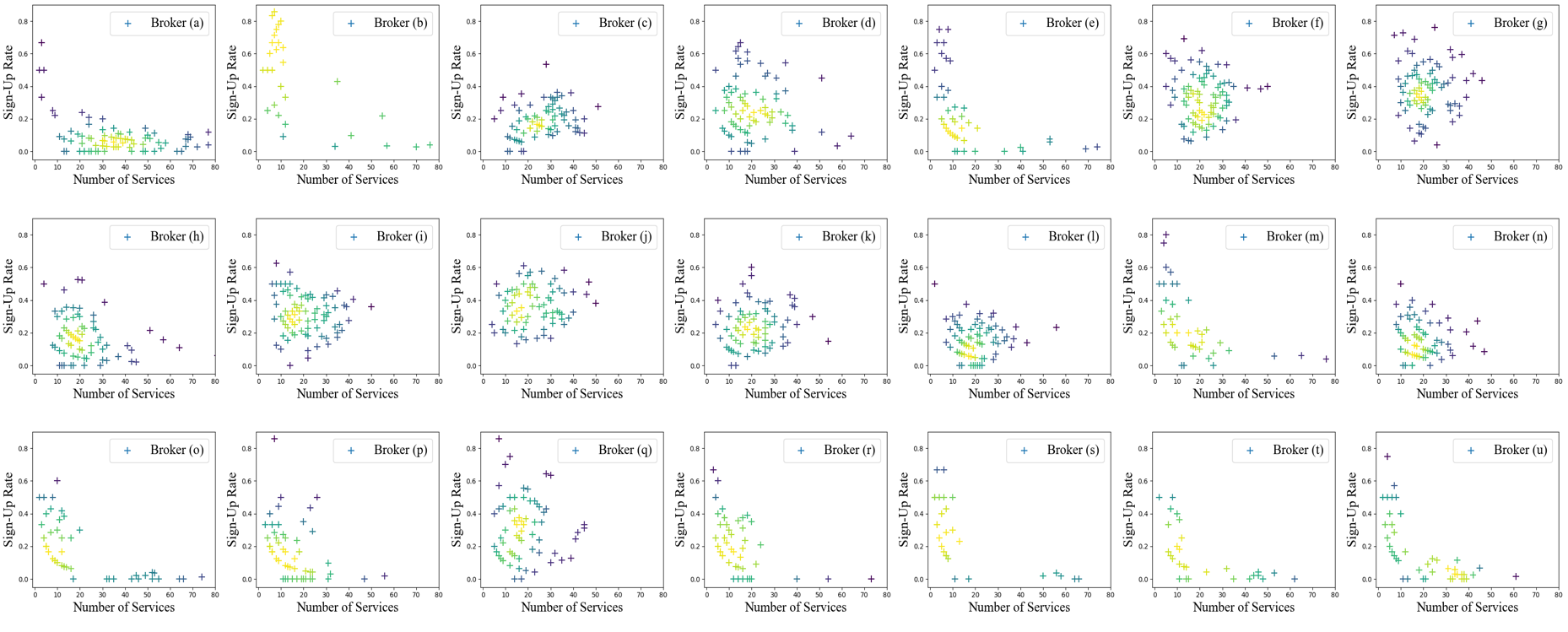}
    \vspace{-0.8em}
    \caption{
    {\small The average sign-up rate of top brokers in City A. 
    We apply the Gaussian kernel density estimation to fit the empirical distribution of broker's performance with workload.
    The center of the performance distribution is in the lighter color, which represents their accustomed workload area.
    Most top brokers perform better in light area compared with large workload area.
    }
    }
    \label{fig:agents}
    \vspace{-1.5em}
\end{figure*}

\fakeparagraph{Observations}
We observe that the sign-up rates tend to drop with the increase of workload, 
and the decreasing patterns seem complex and broker-specific.
\begin{itemize}
    \item 
    \figref{fig:overall_agents} shows the average sign-up rate of brokers in the two cities with the increase of requests served per day.
    Take City A (in blue) as the example. 
    When the number of requests served is below 40 per day, the average sign-up rate is 14.3$\sim$27.5\%.
    Once brokers have to tackle more than 40 requests a day, their average sign-up rate drops to 2.5$\sim$17.8\%. 
    By employing Welch's t-test, we find that the sign-up rate is statistically significantly correlated to the number of requests served daily ($p$-value $<$ 0.0001).
    Thus excessive workload of brokers lowers the service quality, and even leads to client churn.
    Similar decreasing trends are observed for City B (in red) as well.
    \item 
    We further study the top 50 brokers who serve most requests in City A, where 21 of them serve more than 40 requests occasionally.
    \figref{fig:agents} plots the sign-up rates of these 21 brokers with higher workloads in City A.
    Among all the 21 brokers, their sign-up rates exhibit a decreasing trend as the number of requests served daily increases.
    \item
    Despite the decreasing trend, the relationship between the sign-up rate and the number of requests served tends to be complex, non-linear and broker-specific patterns, as observed from both \figref{fig:overall_agents} and \figref{fig:agents}.
\end{itemize}
    
\subsection{Overloaded Top Brokers}
\label{subsec:overloaded}
Our second motivation is that the top brokers tend to be overloaded due to the top-\textit{k} recommendation mechanism in current online real estate platforms.
This is because the platform lists the top-\textit{k} brokers without accounting for their capacities, while clients incline to select from the top brokers listed by the platform.
We test this claim as follows.

\fakeparagraph{Measurements}
We analyze data of the same online real estate platform in June, 2021 in City A and plot the workload distribution breakdown of brokers recommended by the platform and those not listed by the platform.
By default, the platform recommends the top-3 brokers (see \figref{fig:rec_unfair}).

\begin{figure}[t]
      \centering
      \includegraphics[width=0.35\textwidth]{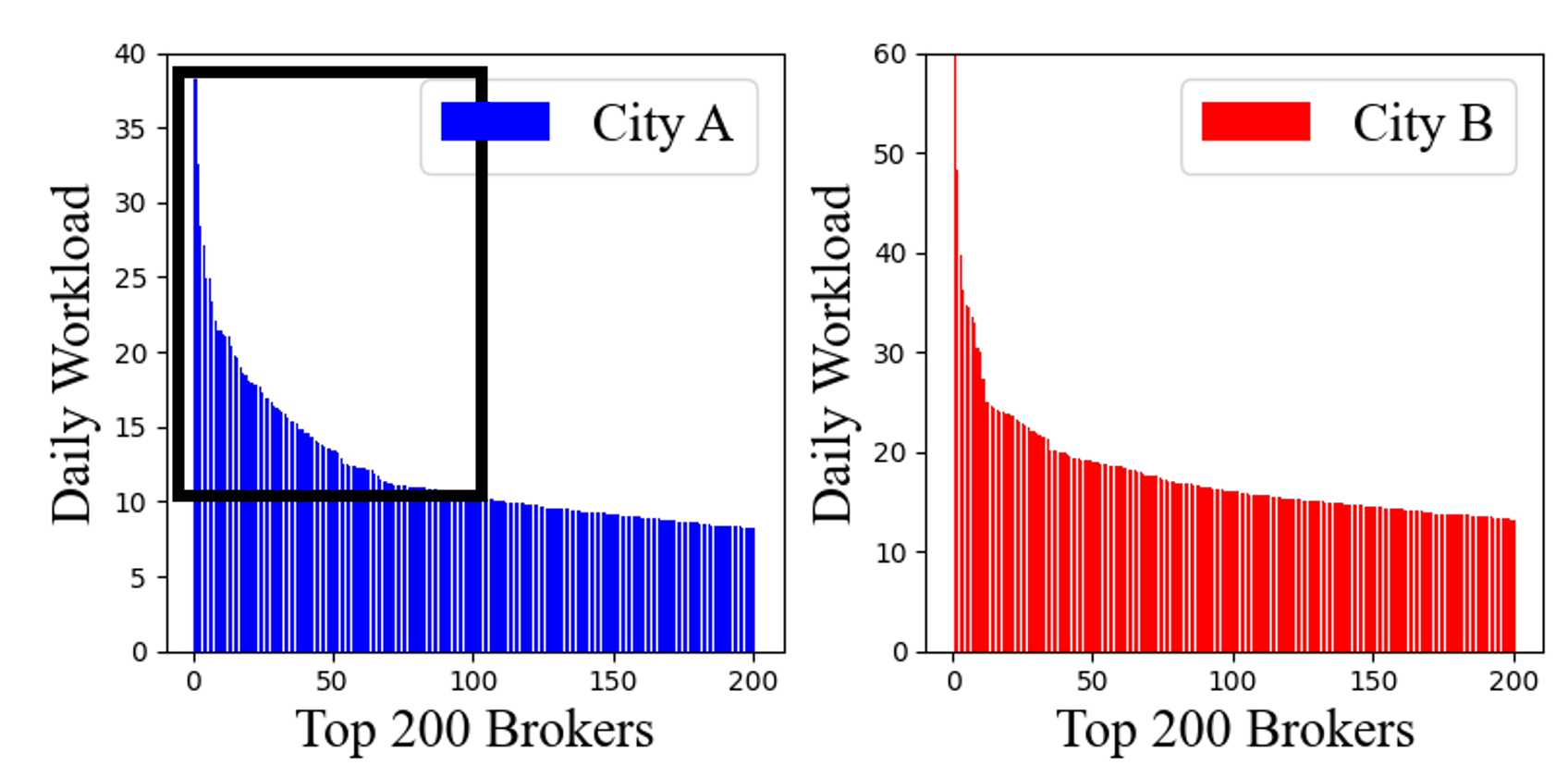}
      \caption{
      {\small 
      The workload distribution of the top brokers in City A and City B.
      People tend to follow the recommended brokers and most requests are served by these top brokers. 
      }
      }
      \label{fig:workload_number}
      \vspace{-2em}
\end{figure}   

\fakeparagraph{Observations}
We observe that the workload distribution is highly unbalanced towards the top-3 brokers recommended by the platform.
In City A, the top-1 broker serves 38.26 requests daily on average, while a broker serves 3.18 requests per day on average, \ie the top-1 broker's workload is 12.03$\times$ larger than the average workload.
\figref{fig:workload_number} plots the workload distribution of top-200 brokers in both City A and City B.
As is shown, their workloads are all notably higher than the city average workload.
Note that from \figref{fig:agents}, a workload of 10 to 20 results in higher sign-up rates.
Hence, roughly a hundred brokers in the black box risk exceeding their limited capacity.

Furthermore, the \textit{Matthew effect} may occur if top brokers are continually tasked requests.
That is, most requests are occupied by top brokers, while others are overlooked.
As a result, the neglected brokers have few opportunities to improve their home-finding skills, which has a negative impact on the development of the platform.
Overloaded top brokers are also common in City B.  

\subsection{Key Observations and Insights}
\label{subsec:summary}

In summary, we observe that the top-\textit{k} recommendation mechanism used by prior online real estate platforms tends to overload the top brokers, which we call the \textit{overload of top brokers} problem.
Overloaded brokers exhibit drop in service quality, eventually leading to a decrease in sign-up rates.
The overloaded problem occurs because the top-\textit{k} recommendation ignores the broker's capacity.

The overloaded phenomenon motivates us to rethink broker matching from an \textit{assignment} perspective.
Rather than blindly recommend a small group of top brokers to all clients, we propose to assign brokers to clients from a global view, while accounting for the capacity of top brokers.
As next, we formulate our perspectives into a capacity-aware assignment problem and propose practical solutions for efficient broker matching in case of unknown workload capacity.

\section{Problem Statement}\label{sec:problem}
In this section, we formally define the capacity-aware assignment (CAA) problem to avoid overloading top brokers during broker matching.
\begin{definition}[Broker] \label{def:broker} 
A broker $b\in B$ is a triple $(\textbf{x}_b, w_b, {s}_{b})$, where $\textbf{x}_b$, $w_b$, and ${s}_{b}$ represent broker $b$'s attributes, the number of requests served daily, and the daily sign-up rate.
\end{definition}
The vector $\textbf{x}_b$ includes attributes such as working years, job title, average dialogue rounds, \etc (details in \tabref{tlb:feature}), which reflects the current working status of broker $b$.
Aligned with our measurements in \secref{sec:motivation}, we use the number of requests served daily $w_b$ to quantify a broker's workload, and the daily sign-up rate ${s}_b$ as the proxy of his/her service quality.
For ease of presentation, we use $\mathcal{T}_b$ to represent the trial triples of broker $b$, \ie $\mathcal{T}_b=\{(\textbf{x}_{1}, w_1, {s}_{1}),...,(\textbf{x}_{d}, w_{d}, {s}_{d})\}$.

In broker matching, brokers are tasked to serve clients' requests. 
Following the conventions in the real estate industry, we perform broker matching on a batch basis and adopt the sign-up rate as the utility of assignment.
\shuyue{We take the fixed-time window batched assignment, \ie the platform presets the time interval and assigns brokers to all the requests appeared.}
Denote the requests appearing in time interval $i\in I$ as $R^{(i)}$.
From the assignment perspective, broker matching can be formulated as the following problem.
\begin{definition}[\underline{C}apacity-\underline{A}ware \underline{A}ssignment (CAA) Problem] \label{def:problem}
\shuyue{
Given a set of brokers $B$, requests in each interval $R^{(i)}$,
and the matching utility between them $u_{r,b}$ (utility of assigning broker $b$ to request $r$), we aim to find assignments $\mathcal{M}^{(i)}$ for each time interval $i$ that maximizes the total utility under the unknown capacity constraint.}
\begin{itemize}
    \item Maximizing Total Utility.
    \begin{equation}
    {\max \sum_{i\in I}\sum_{r,b} u_{r,b}\mathcal{M}_{r,b}^{(i)}}
    \end{equation}
    \item Capacity Constraint.
    \begin{equation}
    {\forall b,\quad \sum_{i\in I}\sum_{r} \mathcal{M}_{r,b}^{(i)} \leq {{c}_{b}}}
    \end{equation}
\end{itemize}
\end{definition}
where $\mathcal{M}_{r,b}^{(i)}=1$ if broker $b$ is assigned to request $r$ and $\mathcal{M}_{r,b}^{(i)}=0$ otherwise.
$u_{r,b}$ is the utility obtained if broker $b$ is assigned to request $r$, \shuyue{which is the input and can be learned from historical assignments using models such as XGboost \cite{KDD16_Xgboost}}.
\shuyue{${c}_{b}$ is the unknown capacity of broker $b$ to be estimated.}
\shuyue{\tabref{tlb:notions} summarizes the major notions throughout this paper.}

\fakeparagraph{Discussions}
We make 
{two} notes on the CAA problem.
\begin{itemize}
    \item 
    Although the batched assignment modeling has been widely adopted in 
    crowdsourcing applications \eg
    ride-hailing
    \cite{KDD18_batch, ICDE19_batch}, 
    it is the first time it is catered for broker matching for online real estate platforms.
    \item
    A unique challenge of the CAA problem against the general batched assignment lies in the broker capacity ${c}_{b}$, which is not given in advance.
    Thus, prior to assignment, effective capacity estimation is necessary.
    Consequently, we aim to devise both capacity estimation and assignment algorithms tailored for online real estate applications.
    \item \shuyue{    Other well-known applications include online healthcare (\eg ZhongAn Insurance\footnote{https://www.zhongan.com/}) and legal consultation (\eg Lvshiguan\footnote{http://www.lvshiguan.com}), where a similar capacity-aware matching problem exists due to the limited workload capacity of workers to serve their assigned requests.}

\begin{table}[htb]
    \centering
    \caption{Summary of major symbols.}
    \label{tlb:notions}
    \begin{tabular}{ll}
    \hline
    \textbf{Notation}          & \textbf{Description}                               \\ \hline
    $\textbf{x}_b,w_b,s_b, c_b$ &attributes, workload, sign-up rate and capacity of broker $b$ \\
    $\mathcal{T}_b$           & trial triples of broker $b$             \\
    $I,i$                         & time intervals horizons and time interval $i$         \\
    $B$,$B_{+}$                 & set of all brokers and set of brokers with residue capacity\\
    $R, R^{(i)}$      &set of all requests and set of requests in interval $i$ \\
    $u_{r,b}$                   & utility of assigning broker $b$ to request $r$  \\
    $\mathcal{M}^{(i)}, \mathcal{M}^{(i)}_{r,b}$    & assignment in interval $i$ and assignment indicator of $(r,b)$         \\
    $UCB_{\textbf{x},c}$               &upper confidence bound of capacity $c$ under context $\textbf{x}$      \\
    $\mathcal{B}_{\theta, \textbf{D}}$      & contextual bandit with learned parameters $\theta, \textbf{D}$  \\
    $\mathcal{S}_{\theta}, g_{\theta}$  & reward mapping function and its gradient \\
    $\mathcal{L(\theta)}$      & loss function of neural networks in contextual bandit \\
    $\mathcal{B}_{b}$      & exclusive contextual bandit of broker $b$  \\
    $\mathcal{V}(cr)$              & capacity-aware value function under  residue capacity $cr$ \\
    $\beta, \gamma, \delta$   &learning rate, discount factor and update threshold \\
    
    \hline
    \end{tabular}
    \vspace{-1em}
\end{table}

\end{itemize}

\begin{table*}[t]
    \caption{ The broker's attributes.}

    \label{tlb:feature}
    \begin{tabular}{l|l|l}
        \hline
        \textbf{Attribute Type}                  & \textbf{Attribute}                & \textbf{Description}                                                          \\ \hline
        \multicolumn{1}{c|}{}                 & Age                             & Broker's age.                                                                  \\
        \multirow{2}{*}{\textbf{Basic Info.}} & Working Year               & The working years as a broker.                                                \\
                                              & Education                       & Education background (\eg{undergraduate, master}).           \\
        \multicolumn{1}{c|}{}                 & Title                           & Job title (\eg{assistant, clerk, manager}).                  \\ \hline
        \multicolumn{1}{c|}{}                 & Response Rate                   & The rate of the broker's response to a request in one minute.                \\
        \textbf{}                             & Dialogue rounds             & The average dialogue rounds via the App in recent 7/14/30/90 days.       \\
        \multicolumn{1}{c|}{}                  & Number of Housing Presentation & The number of broker's presenting houses offline in recent 7/14/30/90 days. \\
        \multicolumn{1}{c|}{}                  & Number of Presentation via VR   & The number of broker's presenting houses via VR in recent 7/14/30/90 days.   \\
        \multicolumn{1}{c|}{}                  & Time of Presentation via VR     & The time of broker's presenting houses via VR in recent 7/14/30/90 days.     \\
                                              & Number of Consultation via Phone & The number of broker answering clients via phone in recent 7/14/30/90 days.    \\
        \multirow{2}{*}{\textbf{Work Profile}} & Time of Consultation via Phone  & The time of broker answering clients via phone in recent 7/14/30/90 days.           \\
                                              & Number of Consultation via App   & The number of broker answering clients via App in recent 7/14/30/90 days.  \\
                                              & Time of Consultation via App     & The time of broker answering clients via App in recent 7/14/30/90 days.    \\
                                              & Number of Maintained Houses     & The number of houses currently maintained by the broker.                      \\
                                              & Number of Served Clients       & The number of clients who are served by the broker in recent 7/14/40/90 days. \\
                                              & Number of Housing Transactions  & The number of housing transactions through the broker in recent 7/14/40/90 days. \\ \hline
        \multirow{2}{*}{\textbf{Preference}}  & Districts Information                 & Broker's preferable communities and area around POIs.                         \\
                                              & Housing Information                   & Broker's preferable price, area and type of houses.                          \\ \hline
        \end{tabular}
    \vspace{-2em}
    \end{table*}
 
\begin{figure}[thb]
    \centering
    \includegraphics[width=0.9\linewidth]{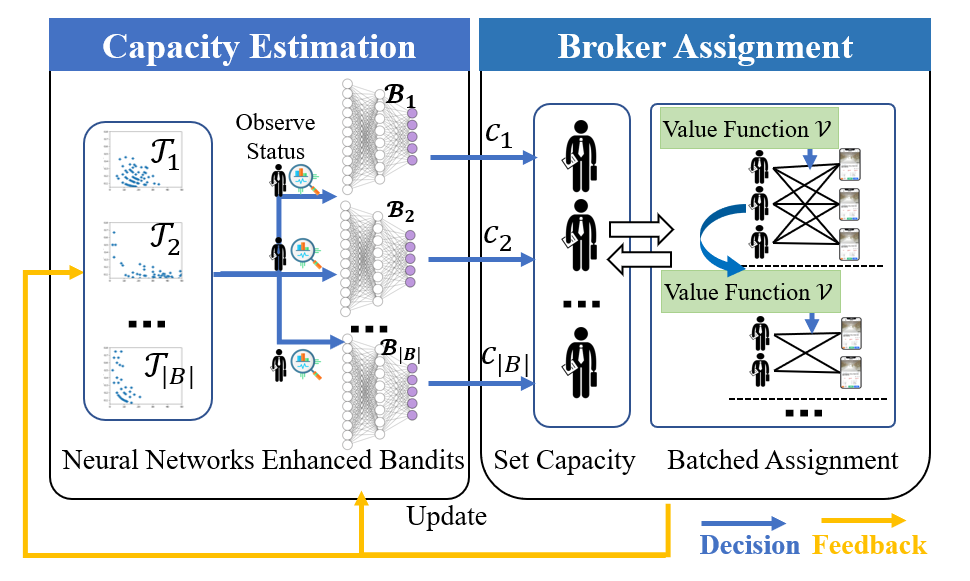}
    \caption{\small Workflow of LACB.}
    \label{fig:overview}
    \vspace{-2em}
\end{figure}

\section{Overview of Our Solution} \label{sec:method}
To solve the CAA problem, we propose \underline{L}earned \underline{A}ssignment with \underline{C}ontextual \underline{B}andits (LACB), which learns the unknown broker capacity via contextual bandit and assigns brokers from the global perspective to maximize the total utility without overloading the top brokers.
We first present an overview of LACB and explain each functional module. 

LACB consists of two functional modules, \textit{capacity estimation} and \textit{capacity-based assignment}.
\begin{itemize}
    \item 
    The \textit{capacity estimation} module decides the daily workload capacity according to the broker's current status by neural network enhanced contextual bandits.
    \item 
    The \textit{capacity-based assignment} module selects a set of brokers satisfying the capacity constraint and assigns them to requests via the capacity-aware value function.
\end{itemize}

\figref{fig:overview} shows the workflow of LACB. 
It operates in two phases: estimation and assignment.
First, we observe the broker's working status and set a  daily workload capacity for him/her by the neural network enhanced bandit.
In the assignment phase, we take the broker's estimated capacity and adopt a capacity value function to guide the assignment, capturing the long-term utility of brokers with different workloads.
Finally, we store the results of batched assignment as feedback to improve future decisions.

\section{Capacity Estimation \label{sec:method_estimate}}
This section introduces our capacity estimation method.
We formulate the workload capacity estimator as a contextual bandit and propose a neural network enhanced policy to decide the daily workload capacities for each broker.

\subsection{Basic Idea}
Our method is motivated by the following three challenges when estimating the broker capacities.
\begin{itemize}
    \item
    \textit{Online training}.
    It is impractical to collect data of a broker's sign-up rate under all possible workloads  prior to model deployment.
    The workload capacity estimator is expected to routinely adapt itself to the observations of workloads and sign-up rates after deployment.
    Our solution is to apply the contextual bandit algorithm \cite{bandit_book}, a learning method to explore the unknown environment and make decisions under various contextual information. 
    \item 
    \textit{Complex relations between a broker’s performance and workload}.
    The broker's performance is closely related to the current working status.
    For instance, a broker may be more exhausted in the sales seasons, and thus less resilient to heavy workloads.
    As observed in \secref{subsec:limited}, the relationship between a broker’s performance and his/her workload is non-linear, and we model such non-linear complexity via neural networks.
    \item 
    \textit{Personalized workload capacity}.
    As shown in \figref{fig:agents}, the relationship between sign-up rate and workload is not only complex but also broker-specific. 
    However, it is challenging to directly learn a personalized capacity estimator due to the sparsity of broker-specific data.
    We first learn a generic capacity estimation model and fine-tune it for individual brokers.
\end{itemize}
As next, we elaborate on the designs in sequel.

\subsection{Workload Capacity Estimator as Contextual Bandit}
As mentioned above, we utilize contextual bandits to learn a generic broker capacity estimator 
{in an online fashion by interacting with the real estate platform.}
\shuyue{The reinforcement learning \cite{RL_book} (\eg Q-learning) mainly models the effect of the decision on the state. 
However, in our scenario, the broker's intrinsic working status is not affected by our decisions, so approaches like Q-learning are infeasible to capacity estimation.}

\fakeparagraph{Review on Contextual Bandit} 
We start with a quick review of the contextual bandit.
A bandit with \textit{$k$-arms} is widely used for online decision-making in an unknown environment over $n$ 
{batches}, where each arm represents a decision.
In each batch, the bandit chooses one arm (decision) and receives a \textit{reward} from the environment.
It then updates its decision-making strategy based on the reward and tries to maximize the total rewards over the $n$ 
{batches}.
A contextual bandit further allows the bandit to make decisions with additional information (\ie the \textit{context}) at the beginning of each 
{batch}.

\fakeparagraph{Our Formulation} 
We now explain how to formulate the workload capacity estimator as a contextual bandit.
We consider the broker's candidate workload capacities as arms of the bandit (represented as $\mathcal{C}$).
The broker's working status $\textbf{x}_b$ is viewed as the \text{context}, based on which the bandit chooses a capacity $c_b\in\mathcal{C}$.
The daily sign-up rate ${s}_{b}$ under workload $w_b$ is used as the reward.
{The workload capacity estimator interacts with the real estate platform}, which is viewed as the unknown environment.
In each {batch}, the real estate platform executes assignment algorithms and reveals the reward ${s}_{b}$.
We use $(\textbf{x}_b, w_b, {s}_b )$ as a trial triple to update the reward function of the bandit (workload capacity estimator) since a broker's workload $w_b$ is usually lower than her/his capacity $c_b$.

\subsection{Choosing Capacity with Neural Network Enhanced UCB}
{With the workload capacity estimator formulated as a contextual bandit, the next question is to determine the \textit{policy} to choose the workload capacity that maximizes the daily sign-up rates for the given broker working status.}

\fakeparagraph{Standard UCB}
A common decision-making policy for a contextual bandit is the upper confidence bound (UCB) algorithm \cite{bandit_book}.
It acts as if the environment is as nice as plausibly possible and uses a context vector to calculate the upper confidence bound of the expected reward.
Then it chooses the arm with the maximum upper confidence bound as the decision.
For our workload capacity estimation, we can use the broker's working status $\textbf{x}$ as the context, and calculate the upper confidence bound $UCB_{\textbf{x},c}$ of each workload capacity $c$ using the equation below \cite{bandit_book}.
\begin{equation} \label{eq:ucb_linear}
    \medop{
    {UCB_{\textbf{x},c} = f_{\theta}(\textbf{x}, c)+\alpha\sqrt{f'_{\theta}(\textbf{x},c)^{T} D^{-1} f'_{\theta}(\textbf{x},c)}}
    }
\end{equation}

where $f_{\theta}(\textbf{x}, c)$ is a linear model which maps the context $\textbf{x}$ to the expected reward of a workload capacity and $f'_{\theta}(\textbf{x},c)$ is its derivative.
$\alpha$ is a preset coefficient parameter.
$\theta$ is the parameters of the linear model and $D$ is the covariance matrix.
$\theta$ and $D$ are parameters to be trained.

\fakeparagraph{NN-enhanced UCB}
A limitation of the standard UCB algorithm is the assumption on the linear relation between the expected reward and the context, \ie $f_{\theta}(\textbf{x}, c)$ in \equref{eq:ucb_linear}.
Hence, the standard UCB fails to depict the non-linear relation between the broker's sign-up rate (expect reward) and the working status (context) in our scenario (see \secref{subsec:limited}).
As a remedy, we replace the linear model by a neural network.
We name the corresponding capacity choosing policy as NN-enhanced UCB.

\begin{figure}[t]
    \centering
    \includegraphics[width=0.8\linewidth]{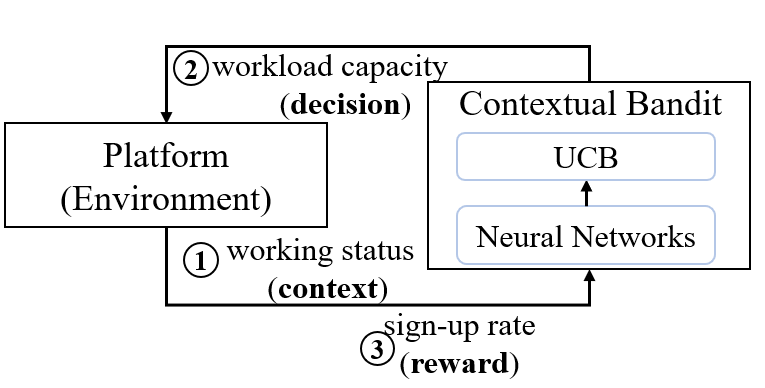}
    \caption{
    \small Workflow of the NN-enhanced UCB.
    } 
    \label{fig:bandit}
    \vspace{-2em}
\end{figure}

\figref{fig:bandit} shows the workflow of our NN-enhanced UCB.
We replace the linear model $f_{\theta}(\textbf{x}, c)$ in standard UCB by a learned reward mapping function $\mathcal{S}_{\theta}(\textbf{x}, c)$.
For simplicity, we adopt a fully connected MLP network with $L$ layers:
\begin{equation}\label{eq:nn}
    \medop
    {\mathcal{S}_{\theta}(\textbf{x},c)= {W}_{L}\cdot\sigma_{L-1} ( \cdots\sigma_{1}({W}_{1}[\textbf{x};c]))}
\end{equation}
where $W_i$ $(1\le i \le L)$ are the learned parameters of each layer and $W_1\in \mathbb{R}^{m\times d}$, $W_i \in \mathbb{R}^{m\times m}$ $(2\le i \le L-1)$, $W_L \in \mathbb{R}^{m\times 1}$.
$\sigma_{i}$ is the ReLU activation.

Let ${\theta}\in\mathbb{R}^{d}$ be all the learnable parameters of the neural network and its gradient is denoted by $g_{\theta}(\textbf{x},c)= \nabla_{\theta}\mathcal{S}_{\theta} \in \mathbb{R}^{d}$.
Then we can easily extend the upper confidence bound of the expected reward in \equref{eq:ucb_linear} as follows.
\begin{equation}\label{eq:ucb_nn}
    \medop
    {UCB_{\textbf{x},c} = \mathcal{S}_{\theta}(\textbf{x},c)+\alpha\sqrt{g_{\theta}(\textbf{x},c)^{T}   \textbf{D}^{-1}  g_{\theta}(\textbf{x},c)}}
\end{equation}
That is, we replace the linear model $f_{\theta}(\textbf{x}, c)$ and its derivative $f'_{\theta}(\textbf{x},c)$ by a neural network $\mathcal{S}_{\theta}(\textbf{x}, c)$ and its gradients $g_{\theta}(\textbf{x},c)$, 
{with $\theta$ and $\textbf{D}$ to be trained, as in the standard UCB.}

\fakeparagraph{Bandit Training}
\algref{alg:nnucb} illustrates how to train a contextual bandit $\mathcal{B}_{\theta, \textbf{D}}$ over time, \ie learn the parameters $\theta$ and $\textbf{D}$ defined in \equref{eq:ucb_nn} of our NN-enhanced UCB.
We first observe the working status $\textbf{x}_b$ and then preferentially explore the capacity with maximum upper confidence bound of the expected reward based on \equref{eq:ucb_nn} (lines 6 to 10).
Next, we update the covariance matrix $\textbf{D}$ using the same extension as \equref{eq:ucb_nn}, \ie replacing the derivative $f'_{\theta}(\textbf{x},c)$ of the linear model by gradients  $g_{\theta}(\textbf{x},c)$ of neural networks (line 12).
Afterwards we observe the actual workload $w_b$ as well as the reward $s_b$ (line 13).
We make such explorations and observations for a batch size of $batchSize$ (preset as 16), where we store the triples $(\textbf{x}, w, s)$ in buffer $ob$.
After collecting observations for $batchSize$ batches, we update the parameters $\theta$ by minimizing the following loss.
\begin{equation} \label{eq:loss}
        \medop
        {\mathcal{L}(\theta) = \sum_{o\in ob} \Vert  \mathcal{S}_{\theta}(\textbf{x}_o, w_o) - s_o \Vert^{2}  + \lambda \Vert \theta \Vert^2_{2}}
\end{equation}
where $(x_o, w_o, s_o)$ is an observation and $\lambda$ is the regularization parameter to avoid over-fitting.

\fakeparagraph{Estimating Capacity with Bandit}
After training the bandit $\mathcal{B}_{\theta, \textbf{D}}$ as \algref{alg:nnucb}, it can be used to estimate the workload capacity for subsequent batches. 
That is, given a working status $\textbf{x}$, we choose the workload capacity $c$ with the maximum upper confidence bound calculated by \equref{eq:ucb_nn}.
For ease of presentation, we use $\mathcal{B}.estimate(\textbf{x})$ to represent the workload capacity estimation, which chooses a suitable workload capacity given working status $\textbf{x}$.

\begin{algorithm}[t]
    \caption{\small NN-enhanced UCB}\label{alg:nnucb}
    \SetAlgoLined
    \KwIn{Regularization parameter $\lambda$, upper confidence bound coefficient $\alpha$, and batch size $batchSize$
    }
    \KwOut{Contextual bandit $\mathcal{B}_{\theta,\textbf{D}}$
    }

     Initialize deviation matrix $\textbf{D} \leftarrow \lambda \textbf{I}$\;
     Initialize observation buffer $ob \leftarrow []$\;
     Initialize parameters $\theta$ with Gauss Distribution\;
        
    \For{each trial $t \in \mathcal{T}$}{
        
        $//$ Explore the workload capacity\; 
        Observe the broker's working status $\textbf{x}$\;
        \For{each candidate capacity $c\in \mathcal{C}$}{
             $U_{\textbf{x},c} \leftarrow \mathcal{S}_{\theta}(\textbf{x},c)+\alpha\sqrt{g_{\theta}(\textbf{x},c)^{T} \textbf{D}^{-1}  g_{\theta}(\textbf{x},c)}$\;
             Choose $b$'s capacity $c^{*} \leftarrow \mathop{\arg\max}\limits_{c}  U_{\textbf{x},c}$\;
        }

        $//$ Update upper confidence and reward function\;
        $\textbf{D} \leftarrow \textbf{D} + g_{\theta}(\textbf{x},c)\cdot g_{\theta}(\textbf{x},c)^{T}$ \;
        Observe broker's workload $w$ and the reward $s$\;
        Store observed triple $(\textbf{x}, w, s)$ in buffer $ob$\;
    
        \If{$ob$.size $=$ $batchSize$}{
            Define loss function $\mathcal{L}(\theta)$ as $\sum_{o\in ob} \Vert  \mathcal{S}_{\theta}(\textbf{x}_o, c_o) - s_o \Vert^2  + \lambda \Vert \theta \Vert^2_{2}$\;
        	$\theta \leftarrow \theta -  \nabla_{\theta}\mathcal{L}(\theta)$\;
        	Clear observation buffer $ob \leftarrow []$\;
        	
        }
    }
    \KwRet contextual bandit $\mathcal{B}_{\theta,\textbf{D}}$\;
\end{algorithm}
   
\subsection{Personalized Workload Capacity Estimator}
As previously mentioned, the contextual bandit only learns a generic capacity estimator for all brokers, yet the workload capacity estimation may be broker-specific.
We enable personalized workload capacity estimation by fine-tuning the neural network $\mathcal{S}_{\theta}(\textbf{x}, c)$ in \equref{eq:ucb_nn} on broker-specific data.

Concretely, we first train a base reward mapping function $\theta^{base}$, \ie the neural network defined in \equref{eq:nn}, on the observations $\cup_{b\in B} \mathcal{T}_b$ of all brokers.
Then we copy the first $L-1$ layers of $\theta^{base}$ to the broker-specific reward mapping function $\theta_{b}$ of broker $b$.
Afterwards, we freeze the first $L-1$ layers of $\theta_{b}$, and fine-tune the last full-connected layer based on the broker's observations $\mathcal{T}_b$ following \algref{alg:nnucb}.
This way, we obtain the personalized reward mapping function.

\subsection{Effectiveness of NN-Enhanced UCB}
We now theoretically analyze the effectiveness of our NN-enhanced UCB policy for contextual bandit based workload capacity estimation.
We assess the effectiveness via the \textit{regret}, a standard performance metric for bandits \cite{bandit_book}.

The regret is defined as the difference in total rewards between the optimal policy and a learned decision-making policy.
For our workload capacity estimation, the regret can be defined as the difference between the sum of sign-up rates with ideal capacities and the sum of sign-up rates with the estimated capacities, which is formalized as follows,
\begin{equation}
    \medop
    {\mathcal{R}egret = \sum_{t=1}^{|\mathcal{T}|} \max_{c\in \mathcal{C}} s(\textbf{x}_b, c) - \sum_{t=1}^{|\mathcal{T}|} s_t}
\end{equation}
where $\mathcal{T}$ is the trial triples, $\max_{c\in \mathcal{C}} s(\textbf{x}_b, c)$ is the ideal rewards under working status $\textbf{x}_b$, and $s_t$ is rewards produced by our NN-enhanced UCB algorithm.

We have the following claim on the regret of our NN-enhanced UCB algorithm.
\begin{theorem} \label{thm:regret}
For a contextual bandit adopting NN-enhanced UCB algorithm with an $L$-layer MLP network, if there are $|\mathcal{C}|$ candidate capacities, the regret of $n$ batches is no more than $\frac{n|\mathcal{C}|\xi^{L}}{\pi^{L-1}}$, where \shuyue{$\pi \approx 3.14$ is the circular constant} and $\xi$ is the maximum single value of parameters in the MLP model.
\end{theorem}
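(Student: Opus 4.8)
The plan is to bound the total regret additively over the $n$ batches and then over the $|\mathcal{C}|$ candidate arms, reducing everything to a single per-arm bound of the form $\xi^{L}/\pi^{L-1}$. Since the reward is accumulated batch by batch, I would write $\mathcal{R}egret=\sum_{t=1}^{n}\bigl(\max_{c\in\mathcal{C}}s(\textbf{x}_b,c)-s_t\bigr)$ and observe that the claimed bound factorizes exactly as $n\cdot|\mathcal{C}|\cdot(\xi^{L}/\pi^{L-1})$. Thus the factor $n$ comes from summing over batches, the factor $|\mathcal{C}|$ from ranging over all arms evaluated by the UCB rule in each batch, and the whole problem collapses to controlling the magnitude of one evaluation of the learned reward surface $\mathcal{S}_{\theta}$.

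First I would use the defining property of the NN-enhanced UCB policy in \equref{eq:ucb_nn}: because the policy picks the capacity with the largest upper confidence bound, the chosen arm's confidence bound dominates that of the optimal arm, which in turn upper-bounds the optimal reward $\max_{c}s(\textbf{x}_b,c)$. This standard \emph{optimism} step converts the per-batch regret into the sum of a confidence width and the approximation error of $\mathcal{S}_{\theta}$, and both quantities are controlled by how large $\mathcal{S}_{\theta}(\textbf{x},c)$ and its gradient $g_{\theta}(\textbf{x},c)$ can become. Hence the remaining work is a forward-pass magnitude bound on the MLP of \equref{eq:nn}.

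The core step is a layer-by-layer propagation argument. Writing the forward pass as $L$ successive multiplications by the weight matrices $W_1,\dots,W_L$, each of whose entries is at most $\xi$ in absolute value, interleaved with the $L-1$ ReLU activations $\sigma_i$, I would bound the magnitude of each layer's output by that of the previous layer times a factor of $\xi$, accumulating $\xi^{L}$ across all $L$ layers. The circular constant enters through the Gaussian initialization of $\theta$ (line~3 of \algref{alg:nnucb}): for a standard-normal preactivation $Z$ one has $\mathbb{E}[\sigma(Z)]=\mathbb{E}[\max(0,Z)]=1/\sqrt{2\pi}$, so each of the $L-1$ hidden ReLU layers contracts the expected signal by a $\pi$-dependent factor, and carrying this contraction through the $L-1$ activations supplies the $\pi^{-(L-1)}$ denominator. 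Combining the $\xi^{L}$ growth from the weights with the $\pi^{-(L-1)}$ contraction from the activations yields the per-arm bound $\xi^{L}/\pi^{L-1}$, and multiplying by $n|\mathcal{C}|$ closes the argument.

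The hard part will be making the ReLU propagation both clean and honest. A naive weight-product bound also accumulates factors of the hidden width $m$ and the input dimension $d$ from the matrix--vector products, so to obtain the stated $\xi^{L}/\pi^{L-1}$ form without stray $m,d$ factors I would need a normalization assumption on the context $[\textbf{x};c]$ and on the layer widths, or I would have to absorb these constants into $\xi$. Moreover, the identity $\mathbb{E}[\max(0,Z)]=1/\sqrt{2\pi}$ is an in-expectation statement at initialization rather than a uniform bound over the trained parameters, and matching it to the exact exponent $\pi^{-(L-1)}$ requires pinning down precisely how the contraction constant is defined. I therefore expect the real difficulty to be a stability argument showing the trained weights stay close enough to their Gaussian initialization that the per-layer $\pi$-contraction persists after training, rather than the essentially mechanical summation over batches and arms.
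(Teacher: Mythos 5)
Your batch-by-batch summation matches the paper's final step, but the core of your argument takes a different route from the paper's proof, and that route has genuine gaps that prevent it from reaching the stated bound. The paper never invokes optimism or any concentration argument: it defines the instantaneous regret $r_t = S_{\theta^{*}}(\textbf{x}_t,c^{*}) - S_{\theta^{*}}(\textbf{x}_t,\hat{c})$ and bounds it purely through the Lipschitz continuity of the learned reward surface in its capacity argument, $r_t \le Lip(S_{\theta^{*}})\,\Vert c^{*}-\hat{c}\Vert_{2}$. The factor $|\mathcal{C}|$ then comes from the crude bound $\Vert c^{*}-\hat{c}\Vert_{2}\le|\mathcal{C}|$ (both arms lie in $\mathcal{C}$), not, as you claim, from ``ranging over all arms evaluated by the UCB rule'' --- no sum over arms ever appears in the paper. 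And the factor $\xi^{L}/\pi^{L-1}$ is exactly the paper's bound on $Lip(S_{\theta^{*}})$, obtained by citing Lemma~2 of \cite{NIPS18_Lip}, which bounds the Lipschitz constant of an $L$-layer MLP by $\Vert W_L\,\mathrm{diag}(\sigma_{L-1})W_{L-1}\cdots \mathrm{diag}(\sigma_{1})W_1\Vert_{2}/\pi^{L-1}$; with ReLU activations and $\Vert W_i\Vert_{op}\le\xi$ this collapses to $\xi^{L}/\pi^{L-1}$.

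Your proposed origin for the $\pi^{-(L-1)}$ factor is the concrete failure point. The Gaussian-initialization identity $\mathbb{E}[\max(0,Z)]=1/\sqrt{2\pi}$ gives a per-layer contraction of $(2\pi)^{-1/2}$, hence $(2\pi)^{-(L-1)/2}$ over $L-1$ layers --- a different base and a different exponent than $\pi^{-(L-1)}$ (for $L=3$ these are roughly $0.159$ versus $0.101$), and, as you note yourself, it is only an in-expectation statement at initialization rather than a uniform bound over trained weights. No stability argument can bridge this, because the theorem's constant is a deterministic property of the network's Lipschitz regularity, not of the initialization distribution. Two further steps of your plan would also need machinery you never build: the optimism step (chosen arm's UCB dominates the optimal arm's true reward) requires the UCB of \equref{eq:ucb_nn} to be a valid high-probability upper bound on the unknown reward, i.e.\ a concentration inequality for the NN-enhanced estimator, which the paper's policy-agnostic, worst-case argument entirely avoids; and a forward-pass magnitude bound controls $|\mathcal{S}_{\theta}(\textbf{x},c)|$, which bounds a regret difference only as a sum of two magnitudes (with the stray $m,d$ factors you admit) and produces no factor of $|\mathcal{C}|$ at all. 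The paper's device --- bounding the \emph{difference} of rewards by the Lipschitz constant times the distance between arms --- is precisely what makes both $|\mathcal{C}|$ and $\pi^{-(L-1)}$ appear, and it is the idea missing from your proposal.
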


\begin{proof}
Let $c^{*}_{t} =\mathop{\arg\max}\limits_{c} S_{\theta^{*}}({\textbf{x}_t,c})$ denote the optimal capacity under the context $\textbf{x}_t$ and $\hat{c}_{t} = \mathop{\arg\max}\limits_{c} U_{\textbf{x}_t,c}$ in batch $t$.
Firstly, we analyze the instantaneous regret $r_t$ in batch $t$, which is defined as:
\begin{equation}
    \medop
    {r_t = S_{\theta^{*}}(\textbf{x}_t,c^{*}) - S_{\theta^{*}}(\textbf{x}_t,\hat{c})}
\end{equation}
    
Using the \textit{Lipschitz continuous} \cite{NIPS18_Lip} to bound instantaneous the regret of the bandit, we have, 
\begin{equation}
    \shuyue{
    \medop 
    {\Vert S_{\theta^{*}}(\textbf{x}_t,c^{*}) - S_{\theta^{*}}(\textbf{x}_t,\hat{c}) \Vert_{2} \le Lip(S_{\theta^{*}}) \Vert c^{*} - \hat{c} \Vert_{2}}
    }
\end{equation}
where \shuyue{$\Vert \cdot \Vert_{2}$ denotes the $L_{2}$-norm} and $Lip(S_{\theta^{*}})$ is called the \textit{Lipschitz constant} of $S_{\theta^{*}}$.

Then, we analyze the upper bound of {Lipschitz constant} $Lip(S_{\theta^{*}})$ of the reward function $S_{\theta^{*}}$.
According to Lemma 2 in \cite{NIPS18_Lip}, we can construct its upper bound,
\begin{equation} \label{eq:Lip_bound}
    \shuyue{
    \medop{
    {Lip(S_{\theta^{*}}) \le \frac{ \Vert W_L \text{diag}(\sigma_{L-1})W_{L-1}\cdots \text{diag}(\sigma_{1})W_1\Vert_{2}}{\pi^{L-1}}}
    }
    }
\end{equation}
where \shuyue{$\pi \approx 3.14$ is the circular constant}.
If we take the \textit{ReLU} as the activation function $\sigma$, we have $\Vert \sigma \Vert_{2} \le \Vert \text{I} \Vert_{2}$, where $\text{I}$ is the identity matrix. 
Thus, the upper bound of Lipschitz constant $ Lip(S_{\theta^{*}})$ can be rewritten as,
\begin{equation}
    \shuyue{
    \medop{
        Lip(S_{\theta^{*}}) \le \frac{ \Pi_{i=1}^{L} \Vert W_i \Vert_{op} \cdot \Pi_{i=1}^{L-1}\Vert \text{I} \Vert_{2}}{\pi^{L-1}}
        \le \frac{\xi^{L}}{\pi^{L-1}}
    }
}
\end{equation}
\shuyue{
where $\Vert W_i \Vert_{op}$ is the operator norm of $W_i$, \ie, the largest single value of $W_i$ ($\Vert W_i \Vert_{op} \le \xi$).
Since both $c^{*}$ and $\hat{c}$ belong to $\mathcal{C}$, we have $\Vert c^{*} - \hat{c}\Vert_{2} \le |\mathcal{C}|$.
}
Thus, for instantaneous regret $r_t$, we can give its upper bound,
\shuyue{
\begin{equation}
       \medop{ r_{t} = S_{\theta^{*}}(\textbf{x}_t,c^{*}) - S_{\theta^{*}}(\textbf{x}_t,\hat{c}) 
       \le Lip(S_{\theta^{*}}) \Vert c^{*} - \hat{c} \Vert_{2}
        \le \frac{|\mathcal{C}|\xi^{L}}{\pi^{L-1}}
        }
\end{equation}
}
Finally, we prove that the regret bound for our NN-enhanced UCB policy over $n$ batches is,
\begin{equation}
    \medop
    {\mathcal{R}egret = \sum_{t=1}^{n} r_{t} \le \frac{n|\mathcal{C}|\xi^{L}}{\pi^{L-1}}}
\end{equation}
\vspace{-1em}
\end{proof}

\fakeparagraph{Discussions}
We draw two practical notes from \thmref{thm:regret}.
\begin{itemize}
    \item 
    Setting a suitable number of candidate capacities is beneficial to select an optimal decision. 
    We empirically determine the range of candidate sets based on historical observations in \secref{sec:motivation} and do not explore the workload capacity with a prominent low sign-up rate.
    \item 
    Although a deeper network may model more complex relationships between a broker's performance and working status, it may also prevent the bandit from choosing the optimal workload capacity.
    We empirically adopt a 3-layer MLP network in our NN-enhanced UCB algorithm to balance the complexity of the neural network and the effectiveness of workload estimation.
    \end{itemize}

\section{Capacity-Based Assignment} \label{sec:method_assignment}
Now we present the assignment module of LACB, which takes the estimated capacity as input and makes assignments by accounting for both the capacity constraint and the dependency of assignments across {batches}.

\subsection{Batched Assignment as Markov Decision Process}
Unlike previous studies \cite{ ICDE19_online_task, ICDE19_batch} that \textit{independently} make assignments for each batch, we propose to match brokers in a more holistic view by modeling the assignments over time as a \textit{Markov Decision Process (MDP)} \cite{bandit_book}.
Such modeling accounts for the dependencies of assignments across batches (\ie residual capacities of brokers over time) and potentially results in a higher total utility.

A standard MDP model consists of four elements: the \textit{state}, \textit{action}, \textit{state transition}, and \textit{reward}.
We explain these elements in the context of broker assignment below.
\begin{itemize}
    \item \textbf{State}. 
    As our assignment decision is capacity-aware, we define the state of each broker as $cr_b \in [0, c_b]$, where $cr_b$ and $c_b$ denotes the broker's residue capacity and the workload capacity, respectively.
    \item \textbf{Action}. 
    In batch $i$, the action is an assignment policy $\mathcal{M}_{r,b}^{(i)}$ for each request $r$ and each broker $b$. 
    If $\mathcal{M}_{r,b}^{(i)}=1$, broker $b$ is assigned to request $r$, and $\mathcal{M}_{r,b}^{(i)}=0$ otherwise.
    In this work, we adopt a value function guided assignment policy (see \secref{sec:method_VFGA}).
    \item \textbf{State Transition}.
    The state of a broker changes as the result of action $\mathcal{M}_{r,b}^{(i)}$. 
    If $\sum \mathcal{M}_{r,b}^{(i)} = 0$, the state of broker $b$ remains the same.
    Otherwise, state $cr_b$ will transit to $cr_b - \sum \mathcal{M}_{r,b}^{(i)}$.
    That is, the residue capacity of the broker is reduced by the number of requests assigned to him/her. 
    \item \textbf{Reward}.
    The rewards of an action is defines as the utility of all brokers in batch $i$ (a.k.a batch utility), \ie $r(\mathcal{M}^{(i)}) = \sum_{r,b}u_{r,b}M^{(i)}_{r,b}$.
    Note that the reward of the MDP model differs from the reward of a bandit (in \secref{sec:method_estimate}), and the latter is the accumulative utility of a single broker, \ie $s_b = \sum_i \sum_{r} u_{r,b}\mathcal{M}^{(i)}_{r,b}$.
\end{itemize}
Note that we formulate the batched assignment as an MDP model (and adopt reinforcement learning based solution) rather than a bandit algorithm because the latter is unfit for long-term planning with state transitions \cite{bandit_book}.

\subsection{Capacity-Aware Assignment} 
\label{sec:method_VFGA}
Given the MDP model above, we utilize a capacity-aware value function to guide the assignments.

\fakeparagraph{Capacity-Aware Value Function}
It is common to make decisions in an MDP by learning the value function of a state \cite{ICDE19_batch, KDD18_batch}.
In this work, we define a capacity-aware value function $\mathcal{V}(i,cr)$, which represents the expected utility of the broker after batch $i$, where $cr$ is the broker's residue capacity.
Such a capacity-aware value function captures the long-term utility of brokers with different residue capacities.
We then use Q-learning \cite{ICDE19_batch}, a classical method with relatively low time overhead, to train the capacity-aware value function, which is based on the \underline{T}emporal-\underline{D}ifference (TD) equation below.
\begin{equation} \label{eq:value_function}
    \medop
    {\mathcal{V}(cr) \leftarrow
             \mathcal{V}(cr) + \beta[u+\gamma \mathcal{V}(cr')- \mathcal{V}(cr)]}
\end{equation}
where $cr/cr'$ are current/transited state after taking an action, $u$ is the reward of an action, $\beta$ and $\gamma$ are the learning rate and discount factor of the TD, respectively.
\begin{algorithm}[bth]
    \caption{Value Function Guided Assignment}
    \label{alg:assignment}
    \SetAlgoLined
    \SetAlgoVlined
    \KwIn{Brokers $B=\{b_1,b_2, ..., b_{|B|}\}$, requests $R =\{R^{(1)}, R^{(2)} ,..., R^{(|I|)}\}$ over $I$ intervals, brokers' exclusive bandits $\{\mathcal{B}_1$, $\mathcal{B}_{2}$,..., $\mathcal{B}_{|B|}\}$ } 
           
    \KwOut{Matching results $\mathcal{M}$}
    
    \For{each broker $b$ in $B$}{
            $c_b \leftarrow \mathcal{B}_{b}.estimate(\text{x}_b)$\;
    }
        
    \For{each interval $i\in I$}{
        Get available brokers $B_{+}=\{b| w_b<c_b \wedge b\in B\}$ \;
        \For{each candidate pair $(r,b)\in R^{(i)}\times B_{+}$}{
         \begin{equation}
            u'_{r,b}\leftarrow \left\{
            \begin{aligned}
                &u_{r,b}+0, &\text{if } f_b \le \delta  \\
                &u_{r,b}+\gamma\mathcal{V}(cr')-\mathcal{V}(cr), & \text{if} f_b > \delta
            \end{aligned} \right.
            \notag
         \end{equation}    
        }
        Execute the Kuhn-Munkres algorithm based on the refined utility $\mathcal{M}^{(i)}=KM(u',R^{(i)},B_{+})$\;
        \For{each broker $b$ in $B_{+}$}{
        $w_b \leftarrow w_b + \sum_{r} \mathcal{M}^{(i)}_{r,b}$\;
        Update capacity-aware value function based on the  Temporal-Difference equation \equref{eq:value_function}\;
        }
    }
        
    \For{each broker $b$ in $B$}{
        $s_b \leftarrow \sum_{i} \sum_{r} u_{r,b}\mathcal{M}^{(i)}_{r,b}$\;
        $\mathcal{B}_{b}.update(\text{x}_b, w_b, s_b)$\;
    }
		 
    \KwRet {$\mathcal{M}=\cup_{i\in I} \mathcal{M}^{(i)}$}

\end{algorithm}

\fakeparagraph{Value Function Guided Assignment (VFGA)}
We can now leverage the above capacity-aware value function to assign brokers from a global view to maximize the total utility.
\algref{alg:assignment} shows the overall assignment algorithm.
First, we determine the personalized capacity of brokers from the contextual bandit $\mathcal{B}_b$ (see \secref{sec:method_estimate}).
In lines 4-14, we make assignment in each batch.
Specifically, in line 5, we first select a set of available brokers $B_{+}$, whose workload $w_b$ is lower than his/her capacity $c_b$.
Then we update the utility of each candidate matching pair as follows.
\begin{equation} \label{eq:refine}
        \medop
        {u'_{r,b} =} 
        \left\{
        \begin{aligned}
            &
            \medop
            {u_{r,b}+0,} &\medop{\text{if } f_b \le \delta}  \\
            &
            \medop
            {u_{r,b}+\gamma\mathcal{V}(cr')-\mathcal{V}(cr),}  &
            \medop
            {\text{if } f_b > \delta}
        \end{aligned}
        \right.
\end{equation}
where $u_{r,b}/u'_{r,b}$ is the original/refined utility respectively, $cr = c_b - w_b$ and $cr' = cr - 1$.
Note that only top brokers may reach their workload capacity, and we only use the value function for top brokers whose frequency $f_b$ of reaching capacity is more than $\delta$, where $\delta$ is a positive number close to 1.
In line 9, we run the classical Kuhn–Munkre (KM) \cite{KM_alg} algorithm on a bipartite graph with the refined utility $u'_{r,b}$ and return the assignment policy $\mathcal{M}^{(i)}$.
In lines 10-12, the workloads of the assigned brokers and the capacity-aware value function are updated according to \equref{eq:value_function}.
In lines 15-17, once we have assigned requests in all batches, we collect the brokers' workloads and performance as feedback to update the bandit of each broker.
As aforementioned, we update the parameters of bandits whenever the observation buffer is full.
Otherwise, we only add new observations to the buffer.
Finally, VFGA returns the assignment results $\mathcal{M}$.

\shuyue{\fakeparagraph{Discussions}
    We make two notes on the assignment module.
}
\begin{itemize}
    \item Since the number of requests $|R|$ is usually smaller than that of brokers $|B|$, a common practice is to add some dummy vertices to the smaller part to construct a balanced bipartite graph \cite{VLDB21_Match, KBS19_SkewMatch, Skewed_Match12}.
    By adding $|B|-|R|$ dummy vertices, we obtain a balanced one with $|B|$ vertices on both sides and can execute the classical KM algorithm.
    \shuyue{\item Once a client is unsatisfied with the assigned broker, she/he can appeal to the platform for another broker.
    The platform sets the utility between the client and the assigned broker to 0, restores the broker's workload, and chooses another broker in the next time interval.
    }
\end{itemize}

\begin{figure}[htb]
    \centering
    \vspace{-1em}
    \includegraphics[width=\linewidth]{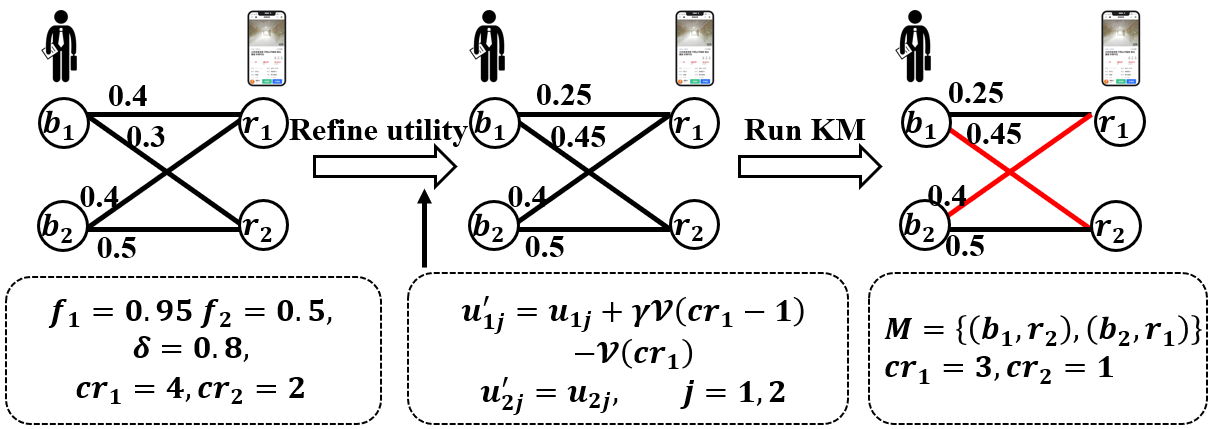}
    \caption{\small Example of the VFGA algorithm.} 
    \label{fig:VFGA}
    \vspace{-1em}
\end{figure}

\figref{fig:VFGA} shows an example of the core steps in the VFGA algorithm.
There are two brokers $b_1, b_2$ and two requests $r_1,r_2$. 
The utility of $(b_1,r_1)$, $(b_1,r_2)$, $(b_2,r_1)$ and $(b_2,r_2)$ are $0.4,0.3,0.4$ and $0.5$ respectively.
Based on \equref{eq:refine}, $b_1$ reaches the threshold $\delta=0.8$ and we refine $b_1$'s utility as $u_{11}=0.25$ and $u_{12}=0.45$, while $b_2$'s utility remains the same.
Then we run the KM algorithm to get the maximum assignment over refined weights, \ie $\{(b_1,r_2),(b_2,r_1)\}$.
Finally, residue capacities of $b_1$ and $b_2$ are updated to $3$ and $1$, respectively.

\fakeparagraph{Complexity Analysis}
For each batch, the time complexity of the VFGA algorithm is determined by the KM algorithm \cite{KM_alg}.
Since the KM algorithm is run on a balanced bipartite graph with $|B|$ vertices on both sides, the time complexity of the VFGA algorithm is $\mathcal{O}(|B|^{3})$.

\subsection{Accelerating Assignment via Broker Selection} 
\label{sec:method_assignment_optimization}
\shuyue{The real estate platform needs to assign brokers to clients as efficiently as possible, which is an essential factor in user experience.
It usually requires responding in 2 seconds for reasonable user experience \cite{ResTime}.}
There is an opportunity to accelerate the VFGA algorithm because the numbers of brokers and requests are highly imbalanced in online real estate platforms.
In each batch, the platform typically assigns thousands of brokers to only tens of requests.
If we prune the brokers that are unlikely to be matched, we can notably reduce the number of dummy vertices added to the request side, and thus lower the time complexity.

\begin{algorithm}[t]
    \caption{Candidate Broker Selection (CBS)}
    \label{alg:sort}
    \SetAlgoLined
    \KwIn{The candidate size $k$, request $r$ and brokers $B$} 
           
    \KwOut{The candidate brokers Top$^{r}_{k}$}
    \If{$|B| \le k$}{
        return $B$\;
    }
    Choose an random value $p$ from $u_{r,b}$ ($b\in B$)\;
    $LC \leftarrow \{b\in B| u_{r,b} \ge p\}$\;
    $RC \leftarrow \{b\in B| u_{r,b} < p\}$\;
    \If{$|LC| \ge k$}{
        {Top}$^{r}_{k} \leftarrow$ CBS$(k,r,LC)$
    } \Else
    {
    {Top}$^{r}_{k}\leftarrow$ LC $\cup$ CBS$(k-|LC|, r,RC)$\;
    }
    \KwRet \text{Top}$^{r}_{k}$
\end{algorithm}

\fakeparagraph{Theoretical Evidence}
We claim that only a small set of brokers are necessary for effective assignments, as shown by Theorem~\ref{thm:unbalanced} and Corollary~\ref{coro:no_loss}.
\begin{theorem}\label{thm:unbalanced}
    Given a bipartite graph $G=<U,V, E>$ ($|U|\le |V|$), and let $opt_{\mathcal{M}}(u)$ be the matched vertex of $u$ in the optimal assignment $\mathcal{M}$.
    There exists an optimal assignment $\mathcal{M}$, such that for any vertex $u \in U$, we have $opt_{\mathcal{M}}(u)\in$ Top$^{u}_{|U|}$, where Top$^{u}_{|U|}$ is a set of vertices with $|U|$ largest edge weight among all vertices connected $u$.
\end{theorem}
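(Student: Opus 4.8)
The plan is to prove this by a local exchange (swap) argument: I start from an arbitrary maximum-weight assignment $\mathcal{M}$ and repeatedly repair it until every matched edge lands in the prescribed top set, all while keeping the total weight equal to the optimum. Call a vertex $u \in U$ \emph{violating} if its partner $opt_{\mathcal{M}}(u) = v$ lies outside $\mathrm{Top}^{u}_{|U|}$. If there are no violating vertices we are done, so suppose $u$ is violating. Since $v \notin \mathrm{Top}^{u}_{|U|}$, every one of the $|U|$ vertices in $\mathrm{Top}^{u}_{|U|}$ has edge weight at least $w(u,v)$ by the definition of the top set.

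The key observation is a counting step. Because $|U| \le |V|$ and the optimal assignment saturates the smaller side, exactly $|U|$ vertices of $V$ are matched under $\mathcal{M}$. One of these matched vertices is $v$ itself, which does not belong to the $|U|$-element set $\mathrm{Top}^{u}_{|U|}$; hence at most $|U|-1$ of the matched vertices can lie inside $\mathrm{Top}^{u}_{|U|}$, and so at least one vertex $v' \in \mathrm{Top}^{u}_{|U|}$ is left unmatched. I then rematch $u$ from $v$ to $v'$: removing $(u,v)$ and adding $(u,v')$ yields a valid assignment because $v'$ was free, and its weight changes by $w(u,v') - w(u,v) \ge 0$. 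Since $\mathcal{M}$ was already optimal, the new weight cannot exceed the optimum, so it must equal the optimum; the repaired assignment is therefore still optimal (and in fact $w(u,v') = w(u,v)$).

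Finally, termination follows from a potential argument: the swap only alters $u$'s partner, so no other vertex's membership status changes, and the number of violating vertices strictly decreases by one at each step. After at most $|U|$ swaps no violating vertex remains, producing the desired optimal assignment. The step I expect to require the most care is the counting argument, since it hinges on the assignment saturating $U$ so that precisely $|U|$ vertices of $V$ are used --- this is exactly the regime guaranteed by the balanced-bipartite / dummy-vertex construction used before running the Kuhn--Munkres algorithm; I also need to handle the degenerate case where $u$ has fewer than $|U|$ neighbors (then $\mathrm{Top}^{u}_{|U|}$ is all of its neighbors and the claim is immediate), while ties in the top set cause no trouble because every step only relies on the weak inequality $w(u,v') \ge w(u,v)$.
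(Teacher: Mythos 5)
Your proof is correct, and its engine is the same as the paper's: count that the vertices of $\mathrm{Top}^{u}_{|U|}$ cannot all be occupied (at most $|U|-1$ matched vertices of $V$ other than $u$'s partner can lie in a set of size $|U|$), pick a free $v'\in \mathrm{Top}^{u}_{|U|}$, and rematch $u$ to $v'$ at no loss of total weight. Where you differ is the wrapper, and the difference is in your favor. The paper fixes a single vertex $u$, assumes for contradiction that \emph{no} optimal assignment matches $u$ inside $\mathrm{Top}^{u}_{|U|}$, performs the one swap, and stops; read literally, that only shows that for \emph{each} $u$ there exists \emph{some} optimal assignment with $opt_{\mathcal{M}}(u)\in \mathrm{Top}^{u}_{|U|}$, whereas the theorem (and Corollary~2, which prunes all non-top brokers at once) needs a \emph{single} optimal assignment satisfying the condition for all $u\in U$ simultaneously. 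Your iterative-repair formulation closes exactly this gap: a swap changes only $u$'s partner, so no other vertex's violating status can change, the number of violations strictly decreases, and after at most $|U|$ swaps one optimal assignment is simultaneously good for every $u$. Your explicit treatment of the degenerate case (a vertex with fewer than $|U|$ neighbors) and of ties is also cleaner than the paper's ``corner case'' phrasing. One small remark: the saturation of $U$ that you lean on in the counting step is not actually needed --- any matching uses at most $|U|$ vertices of $V$, one of which is the offending partner $v\notin \mathrm{Top}^{u}_{|U|}$, so a free top vertex exists regardless; stating it that way frees your argument from the dummy-vertex/complete-graph assumption entirely.
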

\begin{proof}
    The proof is by contradiction.
    We start by assuming the opposite. 
    Let $u$ be any vertex of $U$ and the vertex matched to $u$ be $v^{*}$.
    Assume there is no optimal assignment $\mathcal{M}$, such that $v^{*}=opt_{\mathcal{M}}(u) \in$ Top$^{u}_{|U|}$.
    Consider a corner case, where any other vertex of $U$ is matched to a vertex of Top$^{u}_{|U|}$ in the optimal assignment $\mathcal{M}$.
    There is still at least one unmatched vertex $v'\in$ Top$^{u}_{|U|}$.
    By the definition of Top$^{u}_{|U|}$, we have $w(u,v')\ge w(u,v^{*})$.
    If we replace $(u, v')$ with $(u, v^{*})$ and keep the other matching of $\mathcal{M}$ the same, we can construct another optimal assignment $\mathcal{M}'$, which contradicts the assumption that ``\textit{there is no optimal assignment}."
\end{proof}
\begin{corollary} 
    \label{coro:no_loss}
    Given an imbalanced bipartite graph $G=<U,V,E>$, we need at most $|U|$ candidate vertices for each vertex $u\in U$ to find an optimal assignment, \ie taking the Top$^{u}_{|U|}$ as the candidate set for any $u\in U$.
\end{corollary}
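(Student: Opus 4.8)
The plan is to derive the corollary as an almost immediate consequence of Theorem~\ref{thm:unbalanced}, whose content already does the heavy lifting. The only real work is to translate the existential statement ``there is an optimal assignment $\mathcal{M}$ with $opt_{\mathcal{M}}(u)\in$ Top$^{u}_{|U|}$ for all $u$'' into the operational claim that restricting each $u$'s candidate set to Top$^{u}_{|U|}$ loses no optimality. First I would introduce the pruned subgraph $G'=\langle U, V, E'\rangle$, where $E'=\{(u,v)\in E \mid v\in$ Top$^{u}_{|U|}\}$, so that $G'$ is exactly the instance obtained when we supply at most $|U|$ candidate vertices per $u\in U$. The goal reduces to showing that the maximum-weight assignment on $G'$ has the same value as on $G$.

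Next I would argue the two inequalities that pin down this equality. Since $G'$ is a subgraph of $G$ sharing the same vertex sets and edge weights, any feasible assignment in $G'$ is also feasible in $G$, which immediately gives that the optimum on $G'$ is at most the optimum on $G$. For the reverse direction I invoke Theorem~\ref{thm:unbalanced}: it guarantees a single optimal assignment $\mathcal{M}$ of $G$ in which every matched edge $(u, opt_{\mathcal{M}}(u))$ satisfies $opt_{\mathcal{M}}(u)\in$ Top$^{u}_{|U|}$, i.e.\ every such edge already lies in $E'$. Hence $\mathcal{M}$ is itself a feasible assignment in $G'$ with the full optimal value, so the optimum on $G'$ is at least the optimum on $G$. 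Combining the two inequalities yields equality, which is precisely the corollary: taking Top$^{u}_{|U|}$ as the candidate set for each $u$ suffices to recover an optimal assignment.

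The subtle point I would emphasize is that the theorem delivers one global optimal assignment that simultaneously respects the Top$^{u}_{|U|}$ membership constraint for all $u\in U$ at once, not a different assignment for each $u$ separately; this simultaneity is exactly what makes the pruned instance $G'$ retain an optimum rather than merely retain each edge in isolation. That is the only place where care is needed, and it is handled entirely by how Theorem~\ref{thm:unbalanced} is phrased. Beyond this, there is no genuine obstacle: no exchange argument or new combinatorial construction is required here, since the exchange step already lives inside the proof of Theorem~\ref{thm:unbalanced}. I would close by noting that $|$Top$^{u}_{|U|}| \le |U|$ by definition, so the candidate size per left vertex is bounded by $|U|$, giving the stated ``at most $|U|$ candidate vertices.''
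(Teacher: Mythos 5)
Your proposal is correct and matches the paper's reasoning: the paper offers no separate proof for Corollary~\ref{coro:no_loss}, treating it as an immediate consequence of Theorem~\ref{thm:unbalanced}, and your pruned-subgraph, two-inequality argument is exactly the explicit form of that deduction. Your emphasis on the theorem supplying one optimal assignment that respects the Top$^{u}_{|U|}$ constraint for all $u$ simultaneously is the right (and only) point of care.
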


\fakeparagraph{Candidate Broker Selection Algorithm}
To efficiently select the candidate set Top$^{r}_{|R|}$ for each request $r$, we devise the Candidate Broker Selection (CBS) algorithm (see \algref{alg:sort}), which is inspired by solutions to the classical \textit{selection problem} \cite{KM_alg}.
We mainly introduce how to integrate this optimization into our VFGA.
In each batch, we execute the CBS algorithm to select the necessary brokers $\cup_{r\in R^{(i)}}$Top$^{r}_{|R^{(i)}|}$ after getting the available brokers $B_{+}$ (line 5 in \algref{alg:assignment}).
Then we can perform assignment on a much smaller bipartite graph.

\fakeparagraph{Complexity Analysis}
The expected time complexity of candidate broker selection is $\mathcal{O}(|R||B|)$. 
Based on the above optimization, we can assign brokers on the bipartite graph with $|R|$ vertices and $|R|^{2}$ edges, so the time complexity of executing the KM algorithm is reduced from $\mathcal{O}(|B|^{3})$ to $\mathcal{O}(|R|^{3})$, where $|R|$ are usually much smaller than $|B|$. 
As a result, the overall time complexity of the VFGA algorithm with CBS is $\mathcal{O}(|R|^{3}+|R||B|)$.

\begin{table}[thb]
    \centering
    \caption{ Synthetic datasets.}
    \label{tlb:syndata}
    \begin{tabular}{|c|c|}
    \hline
        \textbf{Factor} & \textbf{Setting}                                  \\ \hline
        The number of brokers $|B|$        & 500, 1000, \textbf{2000}, 5000, 10000    \\ \hline
         The number of requests $|R|$     & 10K, 20K,  \textbf{50K}, 100K, 200K    \\ \hline
         The number of covering days $Day$        & 7,  10, \textbf{14}, 17, 21 \\ \hline
        The degree of imbalance $\sigma$    &0.005, 0.01, \textbf{0.015}, 0.02, 0.05 \\ \hline 
    \end{tabular}
    \vspace{-2em}
\end{table}
\begin{table}[thb]
    \caption{Real-world datasets.}\label{tlb:city_data}
    \centering
    \begin{tabular}{cccc}
    \hline
    City   & Dates                         & Brokers & Requests \\ \hline
    City A & Aug. 1 $\sim$ Aug. 21, 2021 & 5515    & 103106     \\
    City B & Jul. 1 $\sim$ Jul. 21, 2021 & 8155    & 387339     \\
    City C & Jun. 8 $\sim$ Jun. 28, 2021 & 3689    & 74831
    \\ \hline
    \end{tabular}
    \vspace{-1em}
\end{table}

\section{Experimental Study} \label{sec:experiment}
This section presents the evaluations of LACB.
\subsection{Experiment Setup}

\fakeparagraph{Datasets}
We test our capacity-aware assignment algorithm over both synthetic and real datasets.
\begin{itemize}
    \item \textit{Synthetic Datasets}.
    We generate 2000 brokers and 50000 requests in total.
    Then we vary the number of brokers, requests, 
    covering days, and the degree of imbalance.
    The degree of imbalance $\sigma={|R|}/{|B|}$ is the ratio between requests and brokers in each batch.
    To vary $\sigma$, we keep the number of brokers the same and change the number of requests.
    \tabref{tlb:syndata} summarizes the configuration of synthetic datasets.
    We mark the default settings in bold.
    \item \textit{Real-world Datasets}.
    We collect data in three cities covering 21 days from Beike, the largest Chinese online real estate platform.
    \tabref{tlb:city_data} summarizes the statistics of real datasets.
\end{itemize}

\fakeparagraph{Implementation} 
\shuyue{
   Our experiments are conducted on a simulator of Beike, which takes the same utility function deployed and outputs the utility between requests and brokers, so that we can take such utility as the input and assess algorithms over real world matching instances.
}
\shuyue{
In the NN-enhanced UCB, we adopted a $3$-layer MLP network and set the size of input layer, hidden layer and output layers as 128, 64 and 16, respectively.
We take the \textit{ReLU} as the activation function.}
In \algref{alg:nnucb}, we set $\alpha$ to 0.001, the batch size ${batch}{Size}$ to 16 and the regularization parameter $\lambda$ to 0.001.
\shuyue{
In \algref{alg:assignment}, we set the learning rate $\beta$ to $0.25$, discount factor $\gamma$ to $0.9$ and the threshold $\delta$ to $0.8$.
}
All the experiments were conducted on Intel(R) Xeon(R) Gold 6230R CPU @ 2.10GHz with 32GB main memory.
The algorithms are implemented in Python 3. 

 \begin{figure*}[t] 
    \centering
    \subcaptionbox{Utility of varying $|B|$}{
    \centering
    \includegraphics[width=0.2\linewidth]{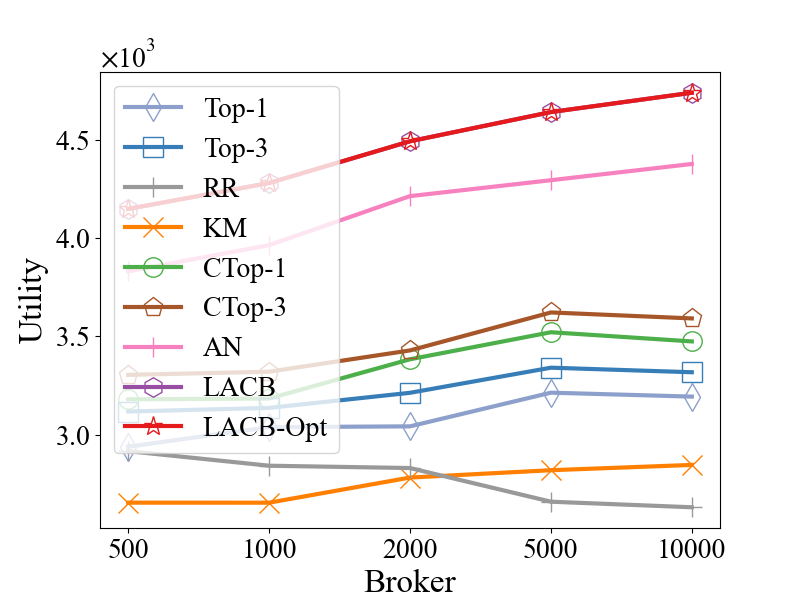}
    }
    \subcaptionbox{Utility of varying $|R|$}{
    \centering
    \includegraphics[width=0.2\linewidth]{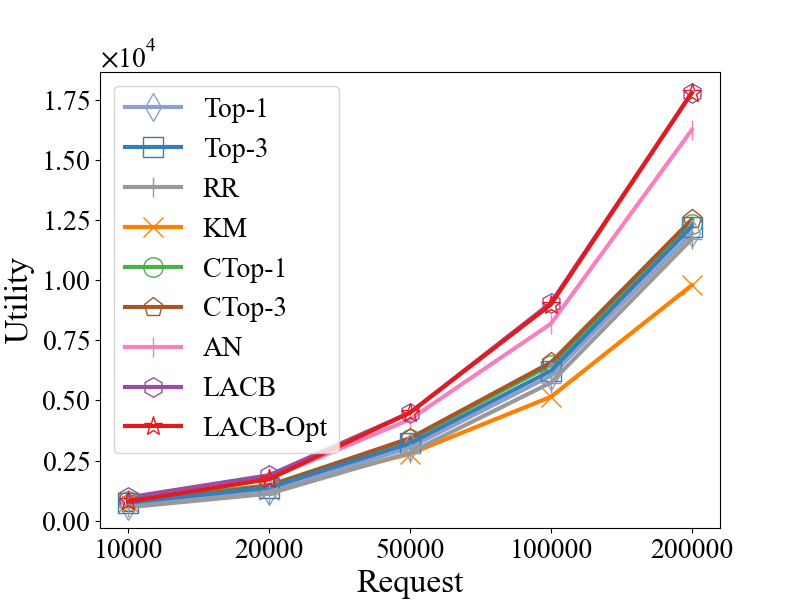}
    }
    \subcaptionbox{Utility of varying $Day$}{
    \centering
    \includegraphics[width=0.2\linewidth]{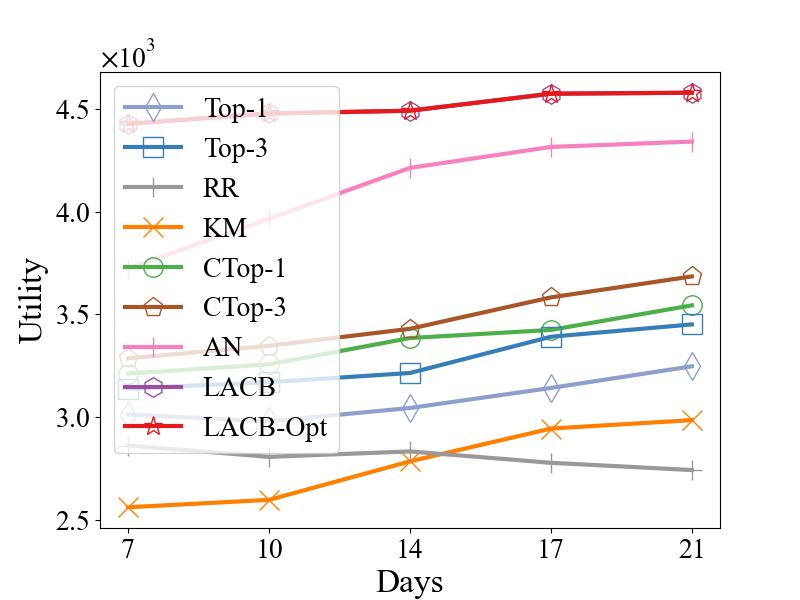}
    }
    \centering
    \subcaptionbox{Utility of varying $\sigma$}{
    \centering
    \includegraphics[width=0.2\linewidth]{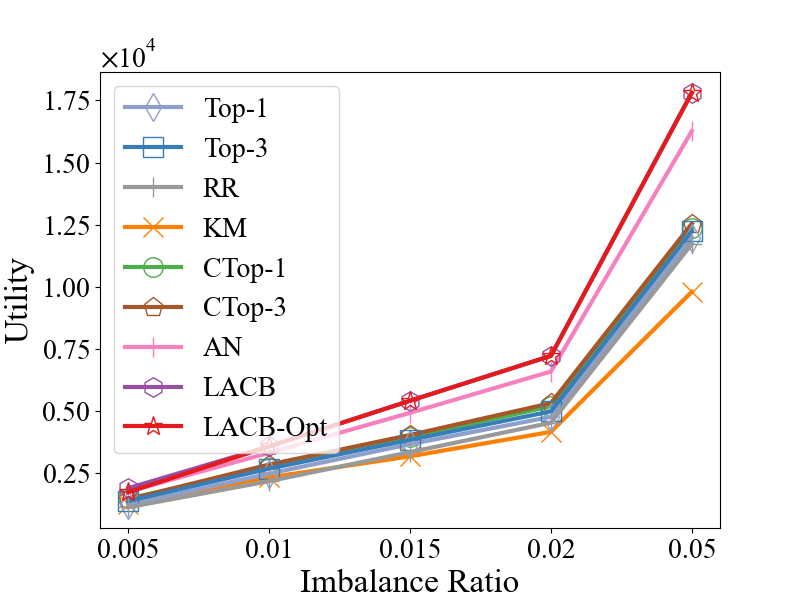}
    }
    
    \centering
    \subcaptionbox{Time of varying $|B|$}{
    \centering
    \includegraphics[width=0.2\linewidth]{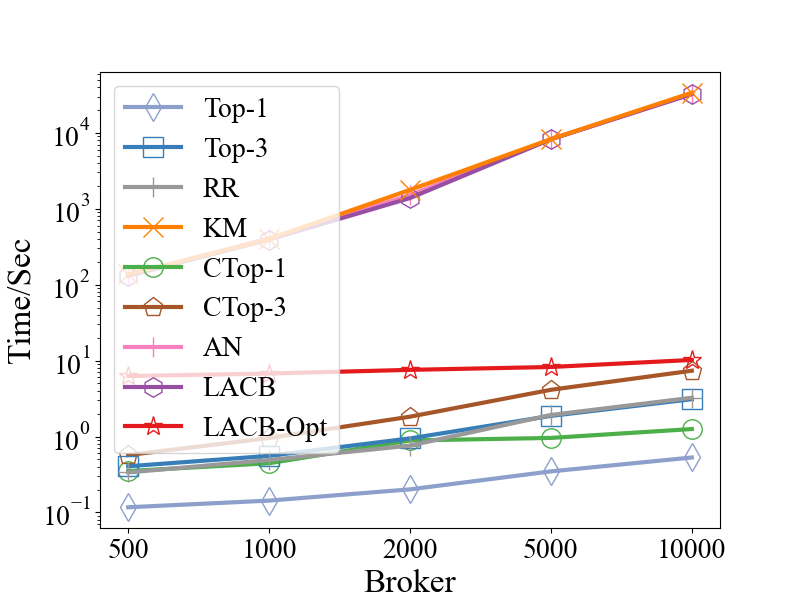}
    }
    \subcaptionbox{Time of varying $|R|$}{
    \centering
    \includegraphics[width=0.2\linewidth]{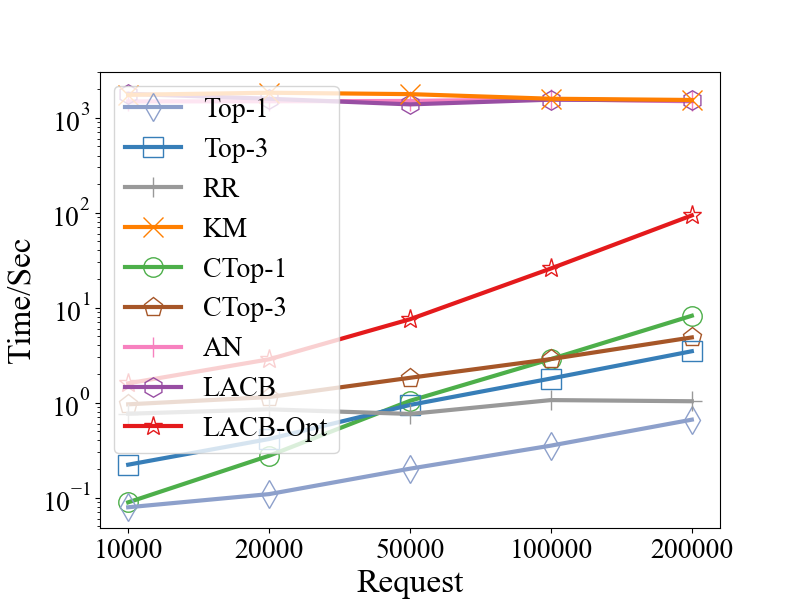}
    }
    \subcaptionbox{Time of varying $Day$}{
    \centering
    \includegraphics[width=0.2\linewidth]{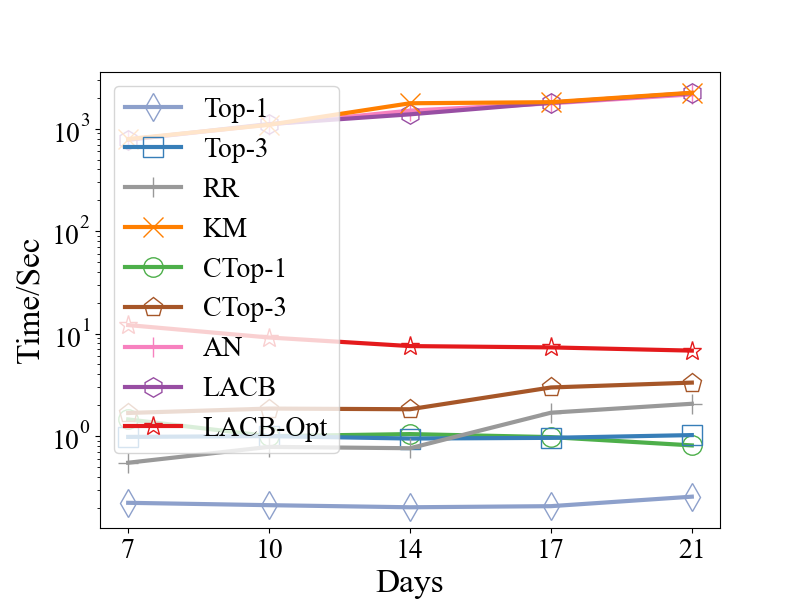}
    }
    \centering
    \subcaptionbox{Time of varying $\sigma$}{
    \centering
    \includegraphics[width=0.2\linewidth]{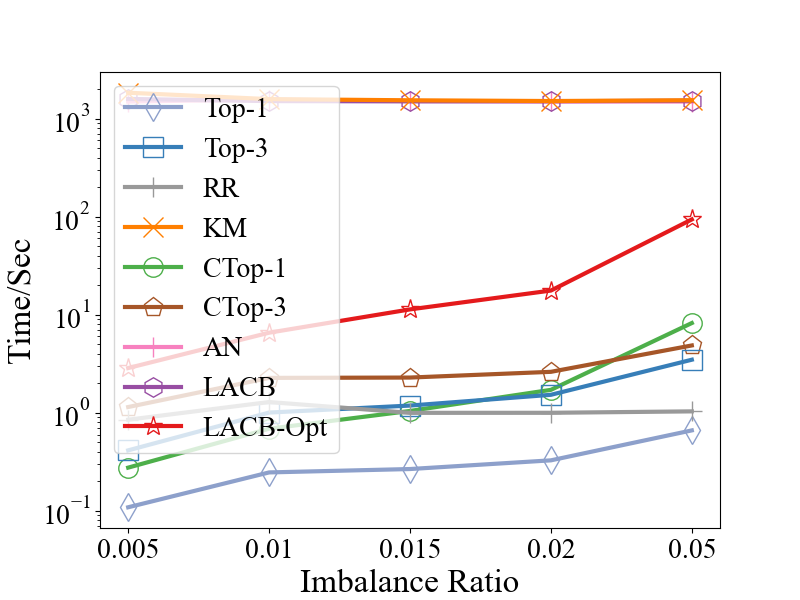}
    }
    
    \caption{\small Results on synthetic datasets.}
    \label{fig:syn_results}
    \vspace{-2em}
    \end{figure*} 
    
\fakeparagraph{Compared Algorithms} 
We compare our LACB \ie \secref{sec:method_VFGA} and LACB with CBS \ie \secref{sec:method_assignment_optimization} (denoted as LACB-Opt) with two categories of baselines.
The first category (Top-K, RR and KM) does not set explicit capacity for brokers, while the second category (CTop-K and AN) first chooses brokers' capacities and then assigns them to requests.
\begin{itemize}
    \item Top-K Recommendation (Top-K) \cite{Top-K}: 
    It ranks and returns K brokers with the highest utility.
    We evaluate both Top-1 and Top-3 recommendation.

    \item Randomized Recommendation (RR): 
    It takes the broker's service quality as the sampling weight and recommends a random broker.
    RR extends prior \textit{fair matching} algorithms \cite{TOSC18_fair} and views service quality as the fairness index.
    It can avoid the overloaded phenomenon by apportioning online requests to all brokers.
    \item Kuhn–Munkre (KM) algorithm \cite{KM_alg}: 
    It runs the KM algorithm to assign brokers to requests in each batch.
    \item Constrained Top-K (CTop-K) \cite{CIKM17_capacity_rec}: 
    It is an extension of Top-K where CTop-K observes the relations of workload and the sign-up rate at city levels (\figref{fig:overall_agents}) and empirically chooses a capacity for all brokers.
    \shuyue{The empirical workload capacity is set as 45, 55 and 40 for brokers in City A, City B and City C, respectively.}
    We test CTop-K with both K=1 and K=3.

    \item Assignment with NeuralUCB (AN): 
    It explores the workload capacity by NeuralUCB \cite{ICML20_nn_bandit} and assigns brokers to requests by the KM algorithm \cite{KM_alg}.
    
    \end{itemize}

\subsection{Performance on Synthetic Datasets}
In this set of experiments, we test the impact of different parameters on the performance of different algorithms.

\fakeparagraph{Impact of \# brokers} 
The first column of \figref{fig:syn_results} shows the results of varying $|B|$.  
For the total utility, our LACB and LACB-Opt dominate other baselines, including Top-K, CTop-K, KM, RR and AN.
We also observe that the utility of Top-K Recommendation decreases as $|B|$ increases, indicating that providing more brokers will not improve the total utility due to the overloaded phenomenon.
Finally, LACB-Opt achieves the same utility as LACB, which is consistent with our theoretical analysis in \corref{coro:no_loss}, \ie the candidate broker selection does not sacrifice the total utility.
In terms of the running time, as $|B|$ increases, KM, AN, and LACB become inefficient due to their cubic time complexity, yet the running time of Top-k, RR and CTop-k only increases marginally.  
The running time of LABT-Opt remains stable since its time complexity is mainly decided by the number of requests and is faster than other KM-based algorithms (KM, AN, LACB).

 \begin{figure*}[t]
        \centering
        \subcaptionbox{City A}{
        \centering
        \includegraphics[width=0.225\linewidth]{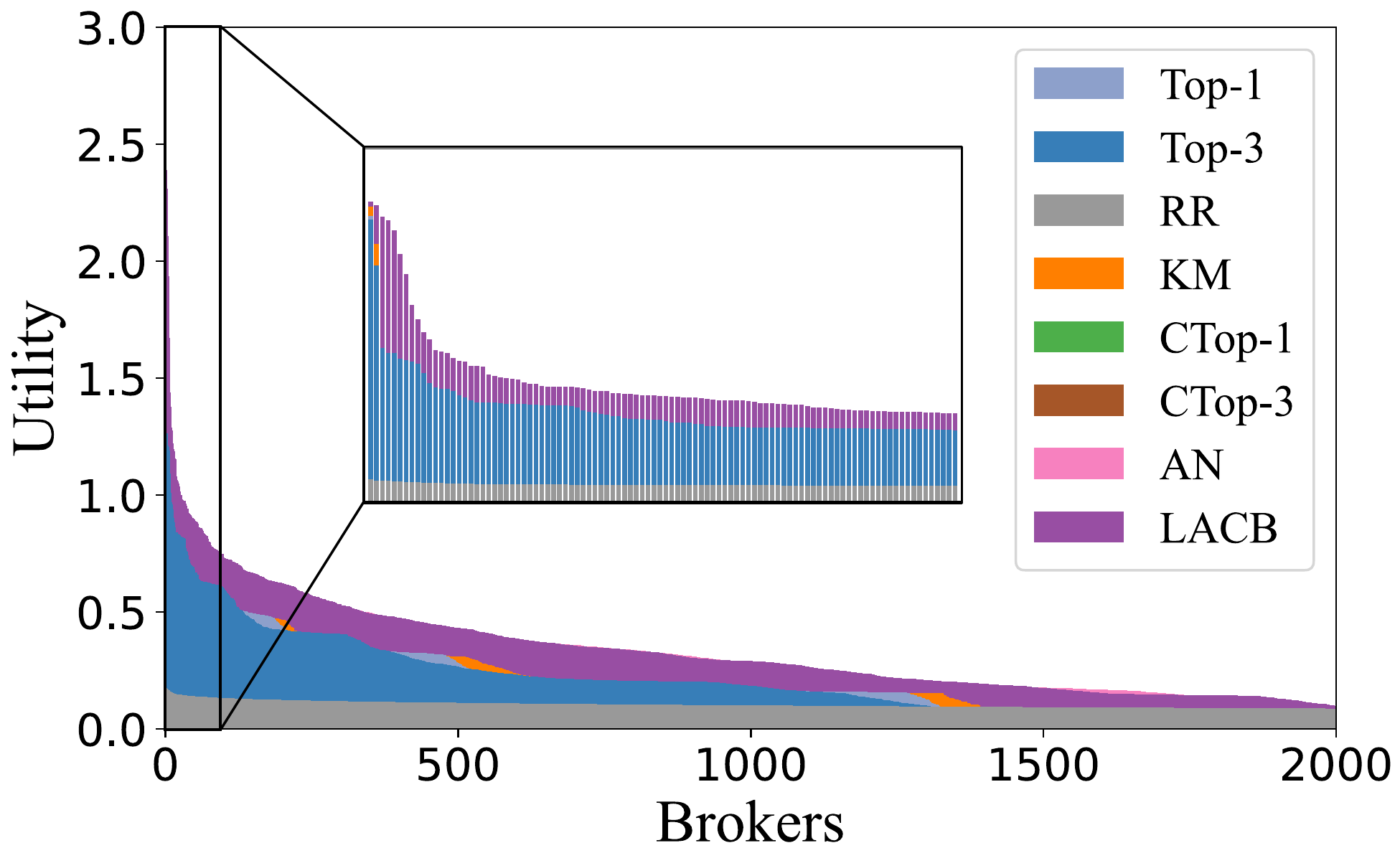}
        }\hfill
        \subcaptionbox{City B}{
        \centering
        \includegraphics[width=0.225\linewidth]{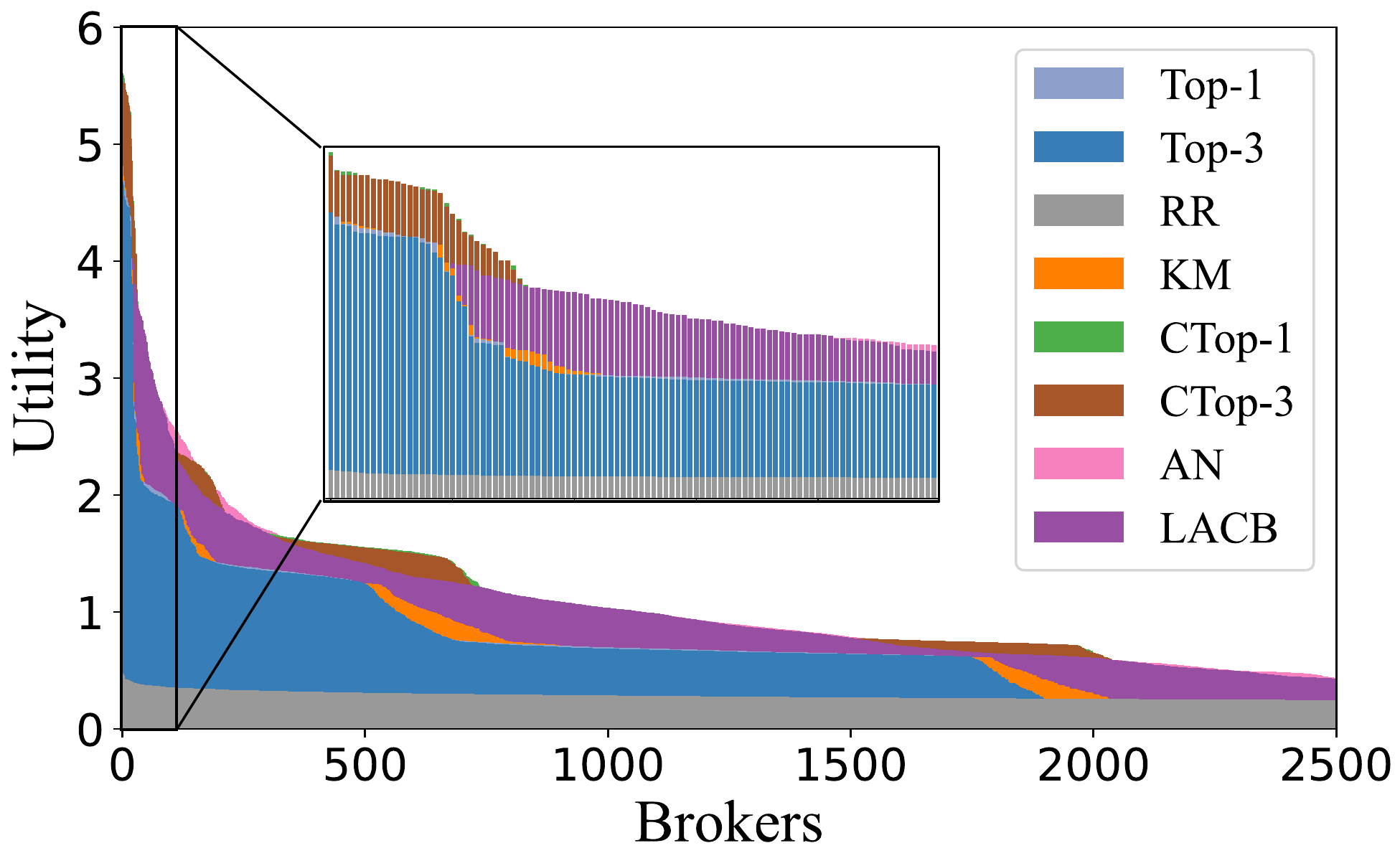}
        }\hfill
        \subcaptionbox{City C}{
        \centering
        \includegraphics[width=0.225\linewidth]{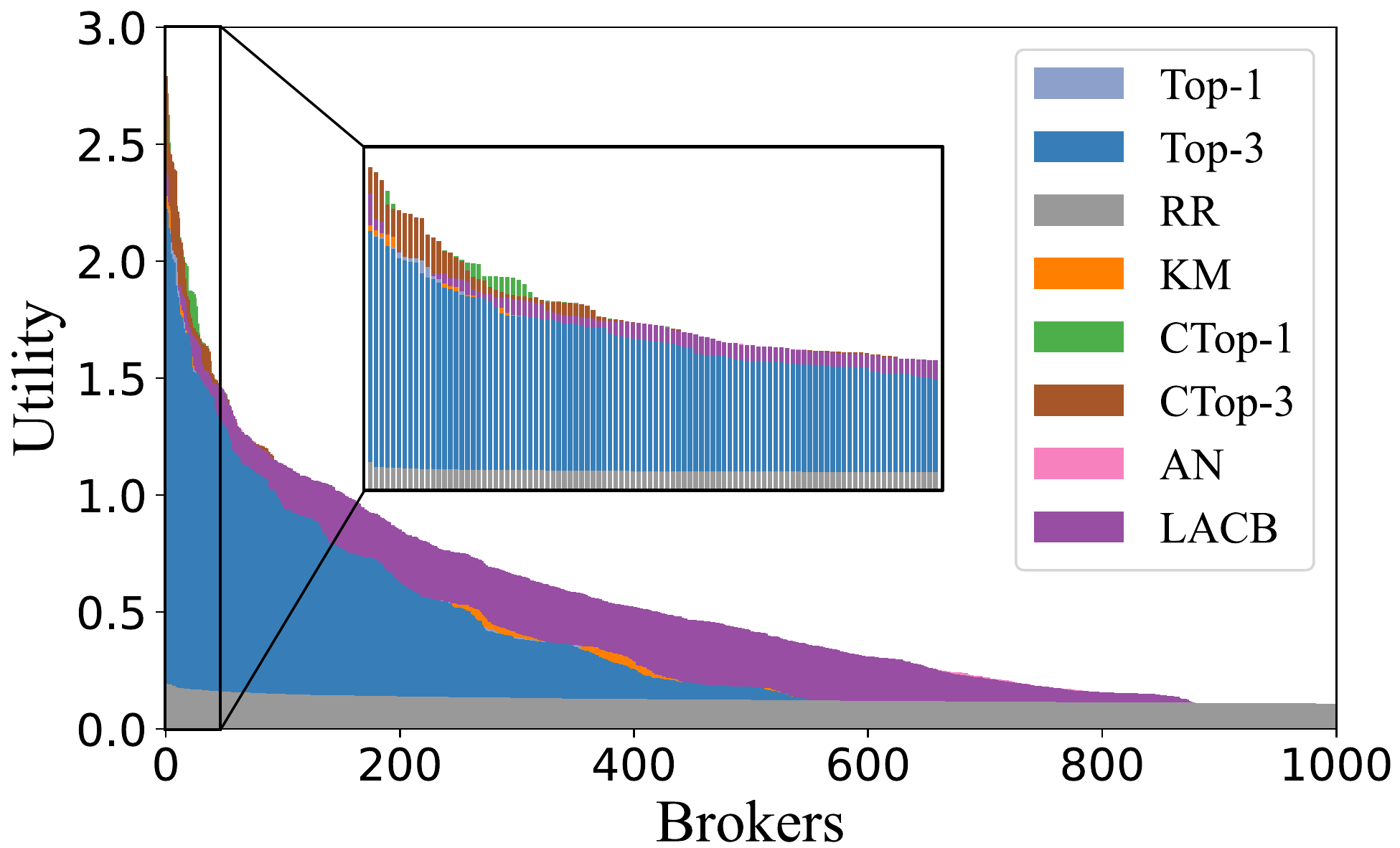}
        }
        \centering
        \caption{\small The utility distribution of all compared algorithms.
        We have a further look on top brokers' utility. }
        \label{fig:exp_utility_dis}
        \vspace{-0.5em}
        \end{figure*} 
        \vspace{-0.5em}
        \begin{figure*}[thb]
        \centering
        \subcaptionbox{City A}{
        \centering
        \includegraphics[width=0.225\linewidth]{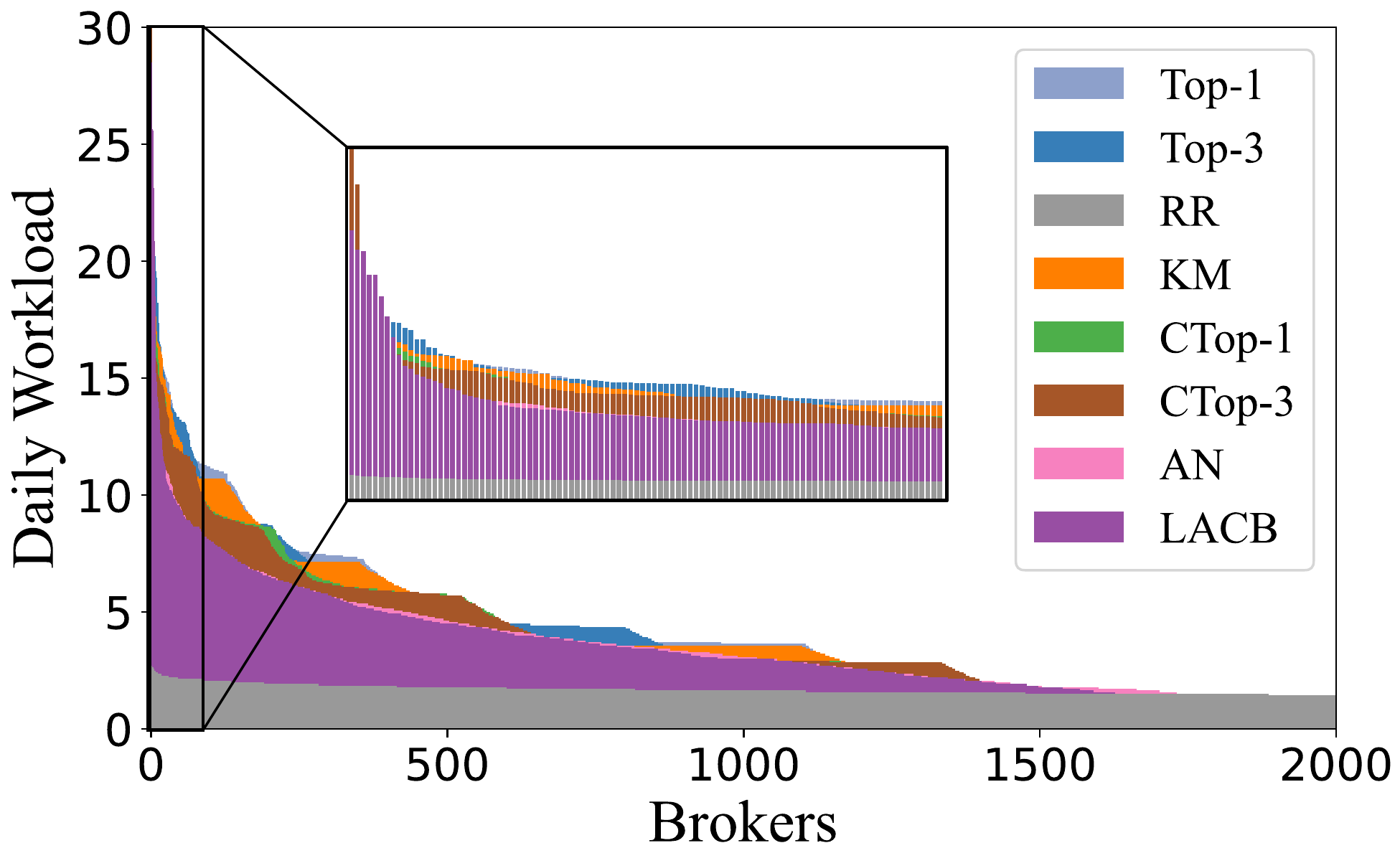}
        }\hfill
        \subcaptionbox{City B}{
        \centering
        \includegraphics[width=0.225\linewidth]{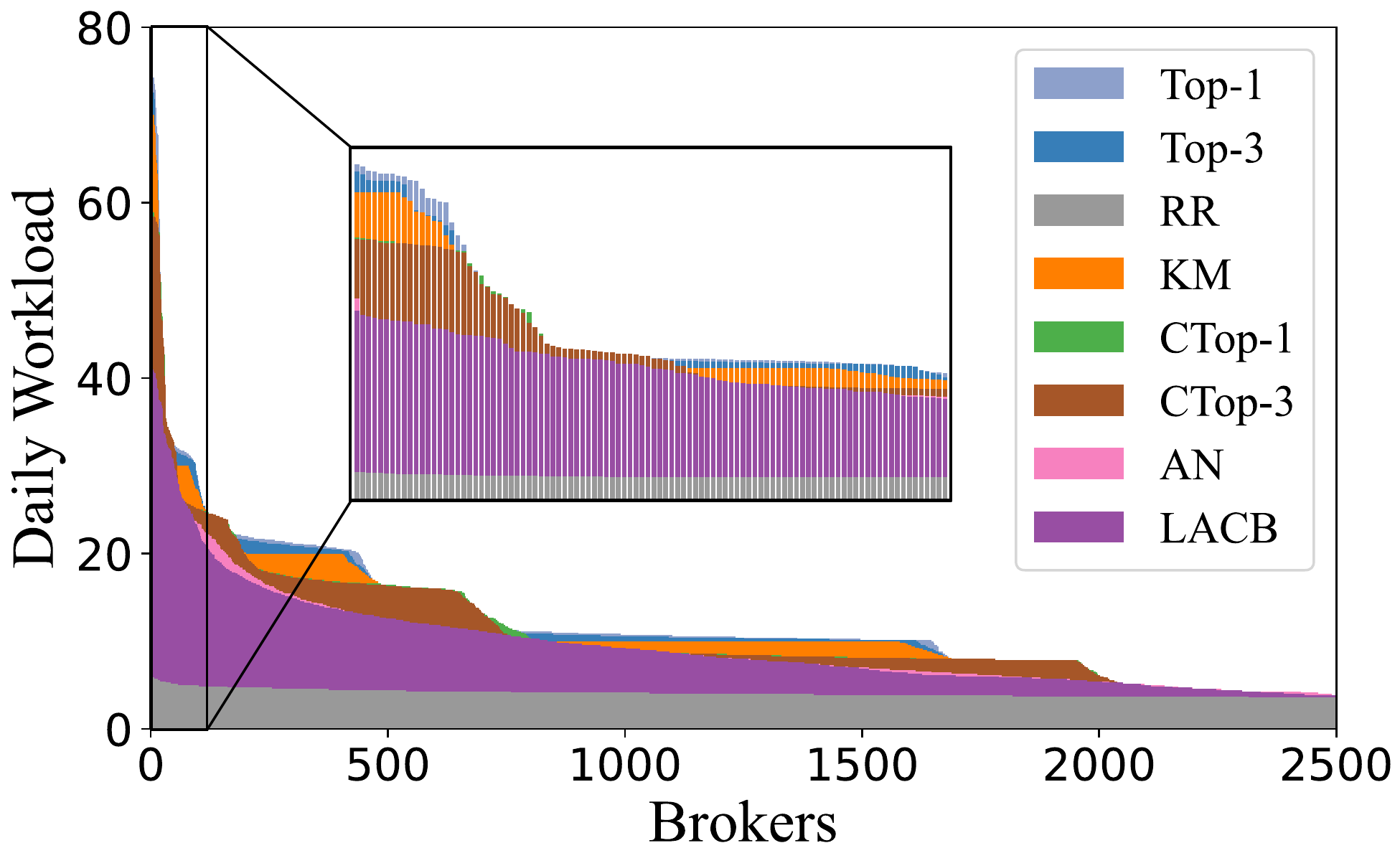}
        }\hfill
        \subcaptionbox{City C}{
        \centering
        \includegraphics[width=0.225\linewidth]{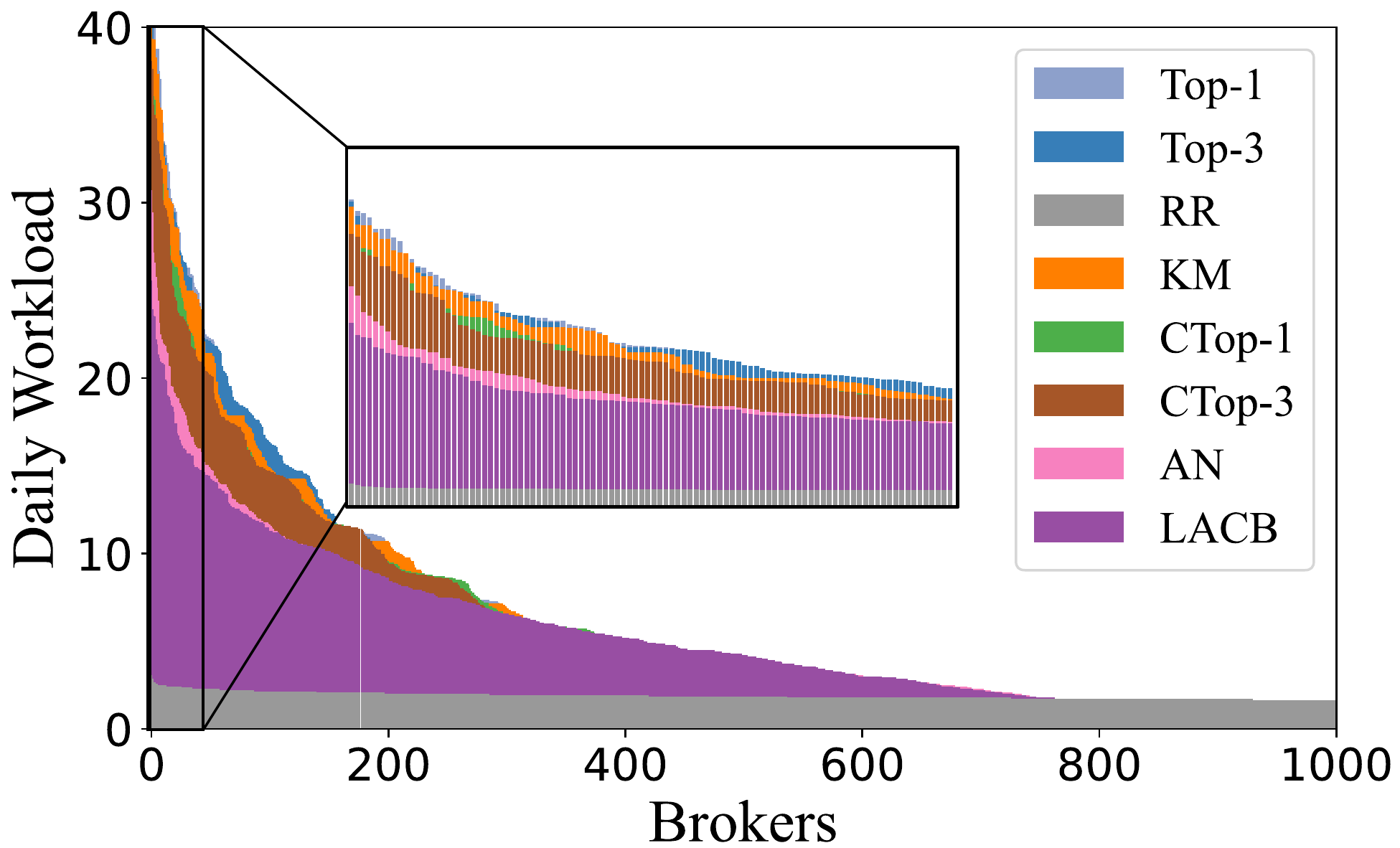}
        }
        \centering
        \vspace{-0.5em}
        \caption{\small The workload distribution of all compared algorithms, where we can take Top-K as overloaded results.}
        \label{fig:exp_workload}
        \vspace{-1.8em}
    \end{figure*}

\fakeparagraph{Impact of \# requests} 
The second column of \figref{fig:syn_results} shows the results of varying $|R|$.
The total utility generally increases as $|R|$ increases.
Similar to the previous experiment, LACB and LACB-Opt achieve the same utility and perform better than other algorithms.
As for the running time, the KM-based algorithms are slower than others.
LACB-Opt is up to $16.4\times \sim 1091.9\times$ faster than KM, AL and LACB. 
LABT-Opt is also competitively fast compared with Top-K and CTop-K, especially with a small number of requests.

\fakeparagraph{Impact of \# covered days}
The third column in \figref{fig:syn_results} presents the results of varying the covered days $Day$.
Our methods still outperform other baselines in terms of total utility.
Particularly, AN yields less utility in covering seven days, indicating that it may face a cold start, while LACB consistently performs well when varying covering days. 
Again, LACB and LACB-Opt perform the same in terms of utility.
As for the running time, a similar trend is observed as varying $|B|$.
Our LACB-Opt is $65.5\times \sim  329.4\times$ faster than other KM-based algorithms.

\fakeparagraph{Impact of the degree of imbalance}
The fourth column in \figref{fig:syn_results} plots the different imbalance ratios $\sigma$.
Since we fix $|B|$ and change $|R|$ to test different imbalance ratios, all algorithms have similar trends in utility as $\sigma$ increases.
The acceleration of LACB-Opt over LACB is closely related to the imbalance ratio $\sigma$.
For example, LACB-Opt is $641.7\times$ and $16.4\times$ faster than LACB when $\sigma=0.005$ and $\sigma=0.05$, respectively.

\subsection{Performance on Real Datasets}

\begin{figure}[h]
    \centering
    \subcaptionbox{Utility of City A}{
    \centering
    \includegraphics[width=0.45\linewidth]{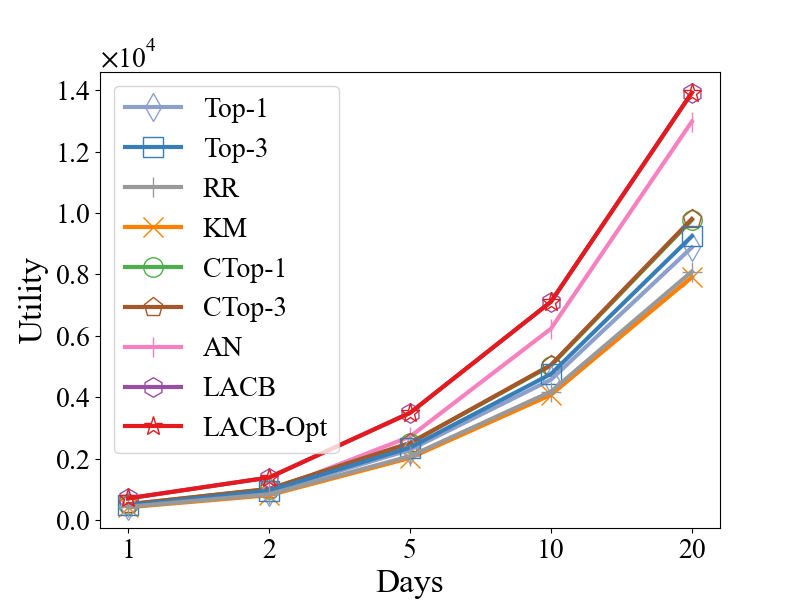}
    }
    \subcaptionbox{Time of City A}{
    \centering
    \includegraphics[width=0.45\linewidth]{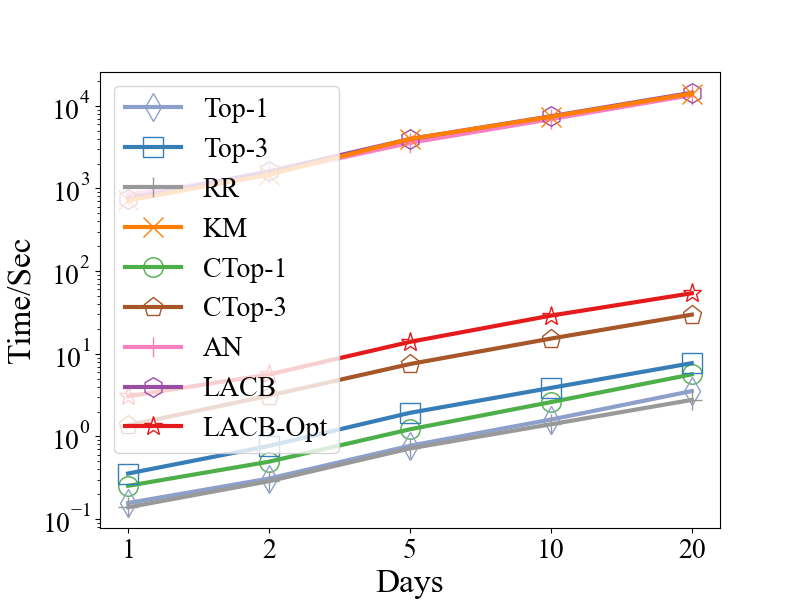}
    }
    
    \subcaptionbox{Utility of City B}{
    \centering
    \includegraphics[width=0.45\linewidth]{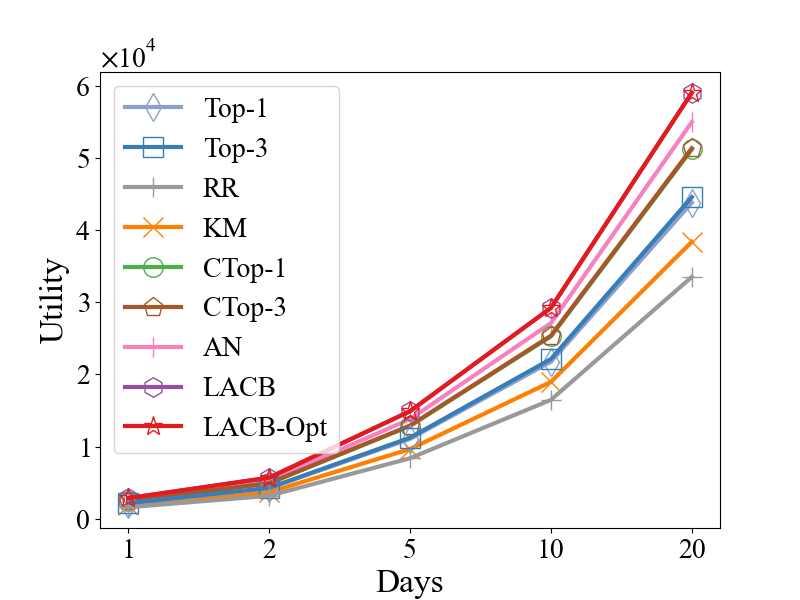}
    }
    \centering
    \subcaptionbox{Time of City B}{
    \centering
    \includegraphics[width=0.45\linewidth]{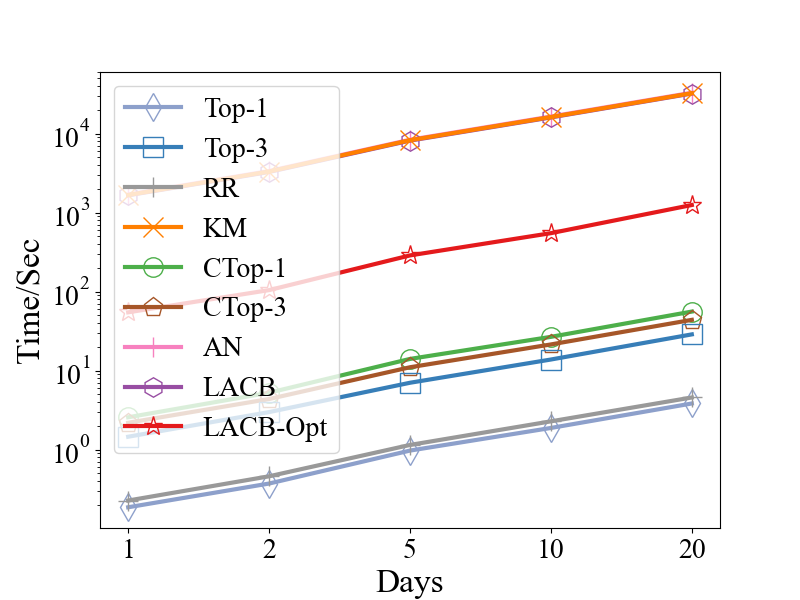}
    }
    \subcaptionbox{Utility of City C}{
    \centering
    \includegraphics[width=0.45\linewidth]{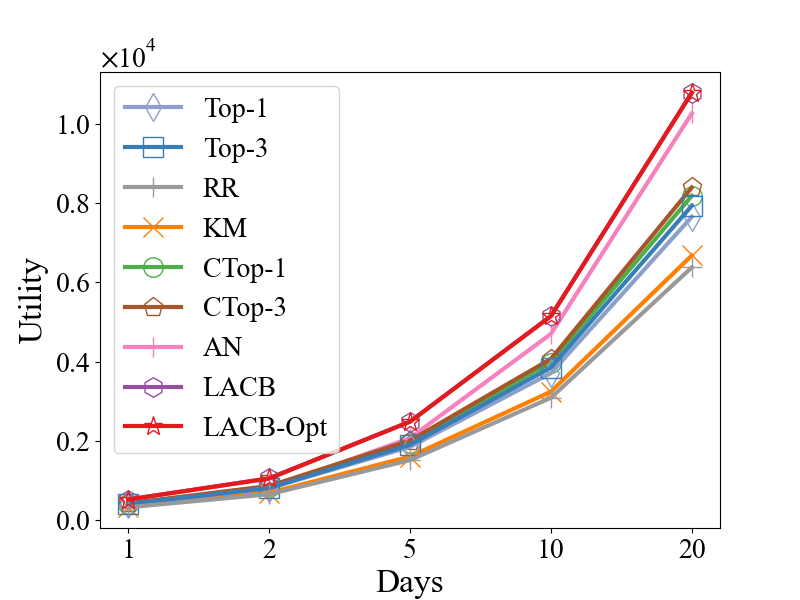}
    }
    \subcaptionbox{Time of City C}{
    \centering
    \includegraphics[width=0.45\linewidth]{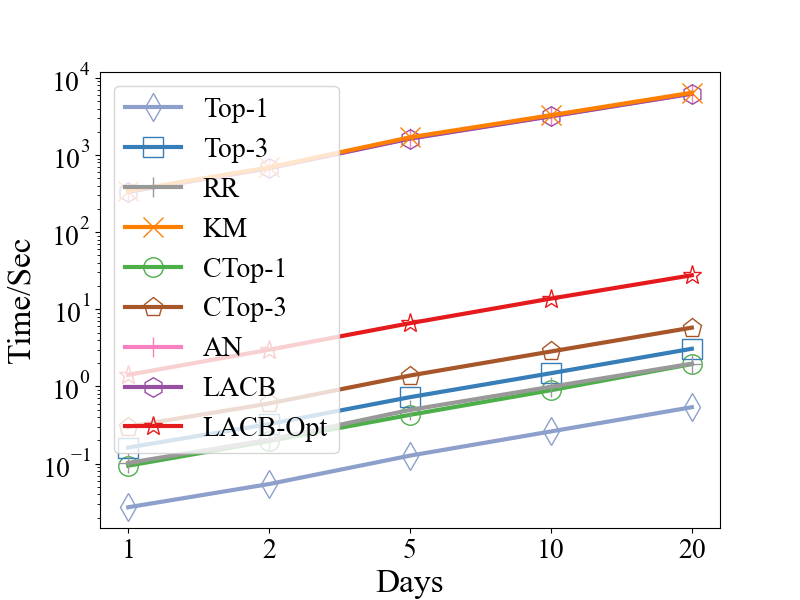}
    }
    \caption{
        \small Results on real-world datasets.
    }
    \label{fig:real_results}    
    \vspace{-1em}
\end{figure}

\fakeparagraph{Overall Performance}
\figref{fig:real_results} presents the results on three real-world datasets.
Take City A as an example and we can first make the following observations in total utility.
As expected, Top-K performs poorly on all three datasets.
Top-3 slightly outperforms Top-1 because Top-1 more easily overloads the recommended brokers.
CTop-K improves the total utility over Top-K, indicating the necessity of capacity awareness.
AN outperforms most baselines due to contextual bandits while our LACB and LACB-Opt outperform AN.
The running time increases linearly over days.
Similar to the results on synthetic datasets, KM, AN, and LACB are the slowest due to their cubic time complexity.
LACB-Opt is competitive in efficiency, only {$1.7 \sim 24.2$} seconds slower than Top-K and CTop-K and $233.4 \times \sim 284.9\times$ faster than other KM-based algorithms.
We have similar observations on data from all three cities.  

\fakeparagraph{In-depth Analysis of Brokers}
To understand the gain of our algorithms over baselines, we take a closer look at the distributions of both utility and workload of baselines.
As aforementioned, LACB-Opt only differs from LACB in efficiency, so there is no need to include LACB-Opt in the following analysis.
\shuyue{As we focus on the utility/workload of top brokers, we only show brokers with higher utility or workload.
The utility/workload of other brokers exhibits a similar long-tail distribution as \figref{fig:exp_utility_dis} and \figref{fig:exp_workload}, and is omitted here.}
\begin{itemize}
    \item \textit{Utility Distribution}.
    \figref{fig:exp_utility_dis} plots the utility distribution of all algorithms.
    Take City A as an example.
    Capacity-based assignment algorithms (CTop-K, AN, and LACB) achieve higher utility than Top-K for most brokers, \ie avoiding the overloaded phenomenon is also beneficial at individual levels.
    Particularly, 80.8\% brokers in LACB have an improvement in utility compared with Top-K.
    RR results in closer utilities for most brokers, as it randomly apportions requests to all brokers.
    However, RR decreases the utility of 25.7\% brokers compared with Top-K.
    Similar observations are found in City B and City C.
    To summarize, LACB can improve the utility of both top brokers and the remaining brokers.
    \item \textit{Workload Distribution}.
    \figref{fig:exp_workload} plots the workload distribution of all algorithms.
    As expected, Top-K leads to the highest workload of top brokers.
    RR randomly apportions requests and results in the lowest workload of top brokers.
    Yet it also prevents top brokers from serving more even if they have spare capacity, thus limiting the potential of top brokers.
    Except for RR, the workloads of top brokers in LACB are the lowest, which means that top brokers in LACB are at low risk of overload.
    To summarize, LACB outperforms baselines thanks to its suitable workload capacity for top brokers.
\end{itemize}

\subsection{Summary of Experimental Results}
We summarize our major experimental findings as follows.
\begin{itemize}
    \item 
    CTop-K outperforms Top-K on all datasets, indicating the necessity of setting a workload capacity.
    \item 
    Our solutions, LACB and LACB-Opt, generally outperform other baselines in total utility.
    They also improve 72.0\%$\sim $82.2\% brokers' utility compared with Top-K.
    \item 
    Without loss of utility, LACB-Opt is up to {$284.9\times$} faster than other KM-based algorithms on real-world datasets, which is consistent with our theoretical analysis.
\end{itemize}


\section{Related Work} \label{sec:related}
Our work is mainly related to three lines of research:  \textit{online task assignment}, \textit{data science for real estate}, and \textit{contextual bandits}.
We review the representative work as follows.

\fakeparagraph{Online Task Assignment} 
Task assignment is the core operation of crowdsourcing \cite{VLDB20_Survey, ICDE18_task_assignment, ICDE17_survey, VLDB16_Crowd, VLDBJ15_Crowd_TA, TKDE21_SC,KDD18_Du,SIGMOD17_Tong,TongJoS17}.
We focus on the online task assignment and summarize previous work into two categories: the vertex-based mode and the batch-based mode.
\textit{(i) vertex-based mode}. 
In \cite{GIS12}, Kazemi \etal study the bipartite matching in spatial crowdsourcing and adopt greedy-based assignment algorithms to optimize total utility.
Tong \etal \cite{VLDB16_Exp} conduct the experimental study on online bipartite matching and show that the greedy algorithm is competitive in many practical settings.
Later studies explore task assignment with different objectives or constraints, such as fairness \cite{TOSC18_fair, KDD21_fair, ICDE21_fairness}, privacy \cite{ICDE20_privacy, ICDE18_privacy, NeuroCom} and incentive \cite{SIGMOD18_bandit_price, MDM21_incentive,WAIM16_SC}. 
Particularly, the fair-aware task assignment algorithm in \cite{TOSC18_fair} can be extended as a baseline to avoid the overloaded phenomenon in our scenario.
\textit{(ii) batch-based mode}. 
This mode prevails in ride-hailing \cite{KDD17_dispatch, ICDE19_batch, KDD18_batch}.
    Wang \etal \cite{ICDE19_batch} adopt reinforcement learning to adaptively adjust the time window of each batch in the assignment.
    Zhang \etal \cite{KDD17_dispatch} solve the batched task assignment by a heuristic hill-climbing method.
    Authors in \cite{KDD18_batch} and \cite{ICDE22_Adaptive} take the value function to optimize the sequential decision-making over a long horizon, which inspires one of the core ideas of our method.
Existing solutions mainly focus on the assignment strategies and assume a known capacity, however,
our scenario is more challenging as we need to estimate the personalized capacity before assignments.

\fakeparagraph{Data Science For Real Estate}
The availability of big data has stimulated the interest in adopting data-driven approaches in the real estate industry \cite{KDD20_tutorial}.
Prior work mainly focus on two topics, housing appraisal \cite{KDD14_house} and housing finding  \cite{KDD18_house_rec}.
In \cite{KDD14_house}, Fu \etal enhance real estate appraisal by modeling dependencies of nearby estates and affiliated business areas.
Zhang \etal \cite{KDD21_house} propose the graph neural networks based method for the asynchronously spatio-temporal dependencies.
Grbovic \etal \cite{KDD18_house_rec} design embedding techniques for real-time personalized housing searching.
In \cite{CHI18_house}, Weng \etal improve the home-finding service by analyzing the reachability between the home locations and concerned POIs.

In this work, we investigate broker matching, an essential issue for online real estate platforms yet has rarely been optimized from a data-driven perspective.

\fakeparagraph{Contextual Bandits}
The contextual bandit utilizes additional information to make decisions, which has been adopted in recommender system \cite{WWW10_bandit_rec}, query optimization \cite{SIGMOD21_bandit} and dynamic pricing \cite{SIGMOD18_bandit_price}.
Early efforts mainly focus on the theoretical analysis of regret bound \cite{CEC20_bandit_survey, NIPS07_bandit_th}.
Li \etal \cite{WWW10_bandit_rec} propose the LinUCB algorithm, which combines a linear predictor with upper bound confidence algorithms in 2010 and follow-up researchers extend the basic framework of LinUCB for efficient learning \cite{KDD17_bandit} and latent relations representation \cite{CIKM16_bandit}.
Recently, some work utilize neural networks to optimize the contextual bandit \cite{ICML20_nn_bandit, SIGMOD21_bandit, KDD21_nn_bandit}.
Zhou \etal \cite{ICML20_nn_bandit} design a neural contextual bandit algorithm without assumption about the reward function, which can be extended as a baseline in our scenario.
Marcus \etal \cite{SIGMOD21_bandit} propose contextual bandits with tree convolutional neural networks to learn query execution strategies in databases.
Ban \etal \cite{KDD21_nn_bandit} propose multi-facet contextual bandits, where each facet is designed to characterize one of users’ needs.
However, these algorithms are not directly applicable in case of personalized estimation, as is in our capacity-aware broker matching scenario.
Our LACB is built upon this thread of research but improves prior studies by adopting layer transfer for more data-efficient bandit learning.

\section{Conclusion} \label{sec:conclusion}
In this paper, we investigate the overloaded phenomenon in broker matching for online real estate platforms.
Specifically, we find that the top-k recommendation mechanism adopted by mainstream platforms for broker matching tends to overwhelm top brokers, which decreases both sign-up rates and total utility.
The root cause lies in the ignorance of the broker's workload capacity.
In response, we propose {LACB}, which effectively estimates the personalized capacity by transferable neural contextual bandits and assigns brokers to clients from a global view.
{We further propose LACB-Opt, which largely improves the efficiency of {LACB} on an imbalanced bipartite graph.}
Extensive evaluations on three cities based on a real-world online real estate platform demonstrate the effectiveness {and efficiency} of our approach.
We envision our work as an insightful reference for a wide spectrum of service industries where workers have limited workload capacity.

\bibliographystyle{IEEEtran}
\bibliography{reference.bib}

\end{document}